\theoremstyle{plain}
\def\Box{\vcenter{\vbox{\hrule\hbox{\vrule
     \vbox to 8.8pt{\hbox to 10pt{}\vfill}\vrule}\hrule}}}
\newcommand{\Ff}{{\mathbb F}}
\newcommand\cc{{\mathcal C}}        %
\newcommand{\F}{ {{\mathbb F}} }
\newcommand\cF{{\mathbf c}}
\def\Tr{\operatorname{Tr}}
\newtheorem{thm}{Theorem}[section]
\newtheorem{lem}[thm]{Lemma}
\newtheorem{cor}[thm]{Corollary}
\newtheorem{remark}{Remark}
\def\Tr{\operatorname{Tr}}
\begin{document}

%

\title{Two constructions of asymptotically optimal codebooks via the trace functions
}

\author{Xia Wu$^1$, Wei Lu$^{1*}$, Xiwang Cao$^{2}$, Ming Chen$^3$
}

\maketitle

\let\thefootnote\relax\footnotetext{This work was supported by the National Natural Science Foundation of China (Grant No. 11801070, 11771007, 61572027) and the Basic Research
Foundation (Natural Science).}

\let\thefootnote\relax\footnotetext{$^1$School of Mathematics, Southeast University, Nanjing 210096, China. (Email: wuxia80@seu.edu.cn)}

\let\thefootnote\relax\footnotetext{$^1$School of Mathematics, Southeast University, Nanjing 210096, China. (Email: luwei1010@seu.edu.cn)}

\let\thefootnote\relax\footnotetext{$^2$Department of Math, Nanjing University of Aeronautics and Astronautics, Nanjing 211100, China. (Email: xwcao@nuaa.edu.cn)}

\let\thefootnote\relax\footnotetext{$^3$School of Information Science and Engineering, Southeast University, Nanjing 210096, China. (Email: chenming@seu.edu.cn)}

\let\thefootnote\relax\footnotetext{$^*$Corresponding author. (Email: luwei1010@seu.edu.cn)}

\begin{abstract}
In this paper, we present two new constructions of complex codebooks with multiplicative characters, additive characters and trace functions over finite fields, and determine the maximal cross-correlation amplitude of these codebooks. We prove that the codebooks we constructed are asymptotically optimal with respect to the Welch bound. Moreover, in the first construction, we generalize the result in \cite{NM5}. In the second construction, we generalize the results in \cite{NM1}, we can  achieve Welch bound for any odd prime $p$,  we also derive the whole distribution of  their inner products. The parameters of these codebooks are new.

{\bf Keywods}: Codebook, asymptotic optimality, Welch bound, trace function.

\end{abstract}

\section{Introduction}
An $(N, K)$ codebook $\cc=\{\mathbf{c}_0,\mathbf{c}_1, ...,\mathbf{c}_{N-1}\}$ is a set of $N$ unit-norm complex vectors  $\mathbf{c}_i \in\mathbb{C}^K$ over an alphabet \textit{A}, where $i=0, 1, \ldots, N-1$. The size of \textit{A} is called the alphabet size of $\cc$. As a performance measure of a codebook in practical applications, the maximum magnitude of inner products between a pair of distinct vectors in $\cc$ is defined by
$$
I_{max}(\cc)=\underset{0\leq i\neq j \leq N-1}{\max}|\mathbf{c}_i\mathbf{c}_j^H|,
$$
where $\mathbf{c}_j^H$ denotes the conjugate transpose of the complex vector $\mathbf{c}_j$.
To evaluate an $(N, K)$ codebook $\cc$, it is important to find the minimun achievable $I_{max}(\cc)$ or its lower bound. The Welch bound \cite{Welch} provides a well-known lower bound on $I_{max}(\cc)$,
$$
I_{max}(\cc)\geq I_W=\sqrt{\frac{N-K}{(N-1)K}}.
$$
The equality holds if and only if for all pairs of $(i,j)$ with $i\neq j$
$$|\mathbf{c}_i\mathbf{c}_j^H|=\sqrt{\frac{N-K}{(N-1)K}}.$$
%

A codebook $\cc$ achieving the Welch bound equality is called a maximum-Welch-bound-equality (MWBE) codebook \cite{MWEB} or an equiangular tight frame \cite{frame}. MWBE codebooks are employed in various applications including code-division multiple-access(CDMA) communication systems \cite{Apl.CDMA}, communications \cite{MWEB}, combinatorial designs \cite{Apl.cd1,Apl.cd2,Apl.cd3}, packing \cite{Apl.pa}, compressed sensing \cite{Apl.cs}, coding theory \cite{Apl.code} and quantum computing \cite{Apl.qc}. To our knowledge, only the following  MWBE codebooks are presented as follows:

\begin{itemize}
\item $(N, N)$ orthogonal MWBE codebooks for any $N>1$ \cite{MWEB}, \cite{Apl.cd3};
\item $(N,N-1)$ MWBE codebooks for $N>1$ based on discrete Fourier transformation matrices \cite{MWEB}, \cite{Apl.cd3} or $m$-sequences \cite{MWEB};
\item $(N, K)$ MWBE codebooks from conference matrices \cite{Apl.pa}, \cite{MWEB1}, where $N=2K=2^{d+1}$ for a positive integer $d$ or $N=2K=p^d+1$  for a prime $p$ and a positive integer $d$;
\item $(N, K)$ MWBE codebooks based on $(N, K, \lambda)$ difference sets in cyclic groups \cite{Apl.cd3} and abelian groups \cite{Apl.cd1}, \cite{Apl.cd2};
\item $(N, K)$ MWBE codebooks from $(2, k, \nu)$-Steiner systems \cite{MWEB2};
\item $(N, K)$ MWBE codebooks depended on graph theory and finite geometries \cite{F1,F2,F3,F4}.
\end{itemize}

The construction of an MWBE codebook is known to be very hard in general, and the known classes of MWBE codebooks only exist for very restrictive $N$ and $K$. Many researches have been done instead to construct near optimal codebooks, i.e., codebook $\cc$ whose $I_{max}(\cc)$ nearly achieves  the Welch bound. In \cite{MWEB}, Sarwate gave some nearly optimal codebooks from codes and signal sets.  As an extension of the optimal codebooks based on difference sets, various types of near optimal codebooks based on almost difference sets, relative difference sets and cyclotomic classes were proposed, see \cite{Apl.cd1,NM2,NM5,NM6,zhou}. Near optimal codebooks constructed from binary row selection sequences were presented in \cite{NM1,NM3Yu}. In \cite{AL2,Luo,Luo2,Luo3}, some near optimal codebooks were constructed  via Jacobi sums and hyper Eisenstein sum.

In this paper, we present two new constructions of complex codebooks with multiplicative characters, additive characters and trace functions over finite fields, and determine the maximal cross-correlation amplitude of these codebooks. The key ideas in our new constructions are based on transitivity of trace and the Fourier expansion of characters. We prove that the codebooks we constructed are asymptotically optimal with respect to the Welch bound. Moreover, in the second construction, we also derive the whole distribution of  their inner products. As a comparison, in Table 1, we list the parameters of some known classes of near optimal codebooks and those of the new ones.

This paper is organized as follows. In section 2, we recall some notations and basic results which will be needed in our discussion. In section 3, we present our first construction of codebooks.  In section 4, we present our second construction of codebooks. In section 5, we conclude this paper.

 \begin{table}[!htbp]
 \newcommand{\tabincell}[2]{\begin{tabular}{@{}#1@{}}#2\end{tabular}}
 \caption{\small The parameters of codebooks asymptotically meeting the Welch bound}
 \label{tabl1}
 \centering
  \setlength{\tabcolsep}{1mm}{
 \begin{tabular}{|c|c|c|c|}
   \hline
   Parameters $(N,K)$ & $I_{max}$ & References \\ \hline
  \tabincell{c}{ $(p^n,K=\frac{p-1}{2p}(p^n+p^{n/2})+1)$ \\
  with odd $p$ }& $\frac{(p+1)p^{n/2}}{2pK}$ & \cite{NM1} \\ \hline
  $(q^2, \frac{(q-1)^2}{2})$, $q=p^s$ with odd $p$ & $\frac{q+1}{(q-1)^2}$ & \cite{NM5} \\ \hline
  $q(q+4), \frac{q+1}{2}$, $q$ is a prime power & $\frac{1}{q+1}$ & \cite{Li} \\ \hline
    $q, \frac{(q+3)(q+1)}{2}$, $q$ is a prime power  & $\frac{\sqrt{q}+1}{q-1}$ & \cite{Li} \\ \hline
   $(p^n-1,\frac{p^n-1}{2})$ with odd $p$ & $\frac{\sqrt{p^n}+1}{p^n-1}$ & \cite{NM3Yu} \\ \hline
   $(q^l+q^{l-1}-1,q^{l-1})$ for any $l>2$ & $\frac{1}{\sqrt{q^{l-1}}}$ & \cite{zhou} \\ \hline
    \tabincell{c} {$((q-1)^k+q^{k-1},q^{k-1}$),\\ for any $k>2$ and $q\geq4$} & $\frac{\sqrt{q^{k+1}}}{(q-1)^k+(-1)^{k+1}}$ & \cite{AL2} \\ \hline
  \tabincell{c} {$((q-1)^k+K,K$), for any $k>2$, \\ where $K=\frac{(q-1)^k+(-1)^{k+1}}{q}$ } & $\frac{\sqrt{q^{k-1}}}{K}$ & \cite{AL2} \\ \hline
   \tabincell{c}{$((q^s-1)^n+K,K$, \\ for any $s>1$ and $n>1$, \\ where $K=\frac{(q^s-1)^n+(-1)^{n+1}}{q})$ } & $\frac{\sqrt{q^{sn+1}}}{(q^s-1)^n+(-1)^{n+1}}$ & \cite{Luo} \\ \hline
   \tabincell{c}{$((q^s-1)^n+q^{sn-1},q^{sn-1})$, \\ for any $s>1$ and $n>1$} & $\frac{\sqrt{q^{sn+1}}}{(q^s-1)^n+(-1)^{n+1}}$ & \cite{Luo} \\ \hline
   \tabincell{c}{$(q-1,\frac{q(r-1)}{2r})$, \\ $r=p^t, q=r^s$, with odd $p$ and $p\nmid s$} & $\frac{\sqrt{r}}{\sqrt{q}(\sqrt{r}-1)K}$ & this paper \\ \hline
   \tabincell{c}{$(q^2,\frac{q(q+1)(r-1)}{2r})$, \\ $r=p^t, q=r^s$, with odd $p$} & $\frac{(r+1)q}{2rK}$ & this paper \\ \hline
 \end{tabular}}
 \end{table}

\section{Preliminaries}
In this section, we introduce some basic results on trace functions, characters and character sums over finite fields, which will be needed in the following sections.

\subsection{Trace functions of finite fields}
Let $\Ff_p$, $\Ff_r$ and $\Ff_q$ be finite fields, with $\Ff_p\subseteq\Ff_r\subseteq\Ff_q$. Let $\Tr_{q/r}(\cdot)$ be the trace functions from $\Ff_q$ to $\Ff_r$. with
$$\Tr_{q/r}(x)=x+x^r+x^{r^2}+...+x^{\frac{q}{r}},\ x\in \Ff_q.$$

Transitivity of trace in \cite{field} is stated as
$$\Tr_{r/p}\cdot\Tr_{q/r}(x)=\Tr_{q/p}(x),\ x\in \Ff_q,$$
it plays important role in our constructions.
%

\subsection{characters over finite fields}
Let $\Ff_q$ be a finite field, in this subsection, we recall the definitions of the additive and multiplicative characters of $\Ff_q$.

For each $a\in \mathbb{F}_q$, an additive character of $\mathbb{F}_q$ is defined by the function $\chi_a(x)=\zeta_p^{\Tr_{q/p}(ax)}$, where $\zeta_p$ is a primitive $p-$th root of complex unity. By the definition, $\chi_a(x)=\chi_1(ax)$. If $a=0$, we call $\chi_0$ the trivial additive character of $\mathbb{F}_q$. Let $\widehat{\Ff_q}$ be the set of all additive characters of $\Ff_q$.The orthogonal relation of additive characters (see \cite{field}) is given by
$$
 \sum_{x\in \mathbb{F}_q}\chi_a(x)=\left\{
            \begin{array}{ll}
              q,& \hbox{if\ $a=0$,}\\
              0,& \hbox{otherwise.}
            \end{array}
          \right.
$$

As in \cite{field}, the multiplicative character of $\mathbb{F}_q$ is defined as follows. For $j=0,1,...,q-2$, the functions $\varphi_j$ defined by
$$\varphi_j(\alpha^i)=\zeta_{q-1}^{ij},$$
are all the multiplicative characters of $\mathbb{F}_q$,
where $\alpha$ is a primitive element of $\mathbb{F}_q^*$, and $0\leq i\leq q-2$. If $j=0$, we have $\varphi_0(x)=1$ for any $x\in \mathbb{F}_q^*$, $\varphi_0$ is called the trivial multiplicative character of $\mathbb{F}_q$. Let $\widehat{\Ff_q^*}$ be the set of all the multiplicative characters of $\Ff_q^*$.

Let $\varphi$ be a multiplicative character of $\Ff_q$, the orthogonal relation of multiplicative characters (see \cite{field}) is given by
$$
 \sum_{x\in \mathbb{F}_q^*}\varphi(x)=\left\{
            \begin{array}{ll}
              q-1,& \hbox{if\ $\varphi=\varphi_0$,}\\
              0,& \hbox{otherwise.}
            \end{array}
          \right.
$$

\subsection{Character sums over finite fields}
Let $\varphi$ be a multiplicative character of $\Ff_q$ and $\chi$  an additive character of $\Ff_q$. Then the Gauss sum over $\Ff_q$ is given by
\begin{equation*}
  G(\varphi,\chi):=G_q(\varphi,\chi)=\sum_{x\in \mathbb{F}_q^*}\varphi(x){\chi}(x).
\end{equation*}
It is easy to see the absolute value of $G_q(\varphi,\chi)$ is at most $q-1$, but is much smaller in general, the following lemma shows all the cases.
\begin{lem}{\rm \cite[Theorem 5.11]{field}}\label{gauss}
Let $\varphi$ be a multiplicative character and $\chi$ an additive character of $\Ff_q$. Then the Gauss sum $G_q(\varphi,\chi)$ satisfies
$$
 G_q(\varphi,\chi)=\left\{
            \begin{array}{ll}
              q-1, & \hbox{if\ $\varphi=\varphi_0$, $\chi=\chi_0$,} \\
              -1,& \hbox{if\ $\varphi=\varphi_0$, $\chi\neq\chi_0$,}\\
              0,& \hbox{if\ $\varphi\neq\varphi_0$, $\chi=\chi_0$.}
            \end{array}
          \right.
$$
For $\varphi\neq\varphi_0$ and $\chi\neq\chi_0$, we have $\left|G_q(\varphi,\chi)\right|=\sqrt{q}$.
\end{lem}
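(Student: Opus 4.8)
The statement is classical; the plan is to split into four cases according to whether $\varphi$ and $\chi$ are trivial, dispatch the first three by the orthogonality relations recalled above, and settle the generic case by a ``square and count'' computation.

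First, the degenerate cases. If $\varphi=\varphi_0$ and $\chi=\chi_0$, every term of the defining sum equals $1$, so $G_q(\varphi_0,\chi_0)=|\Ff_q^*|=q-1$. If $\varphi=\varphi_0$ but $\chi\neq\chi_0$, then $G_q(\varphi_0,\chi)=\sum_{x\in\Ff_q^*}\chi(x)=\sum_{x\in\Ff_q}\chi(x)-\chi(0)=0-1=-1$ by the additive orthogonality relation. If $\varphi\neq\varphi_0$ but $\chi=\chi_0$, then $G_q(\varphi,\chi_0)=\sum_{x\in\Ff_q^*}\varphi(x)=0$ by the multiplicative orthogonality relation.

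For the remaining case $\varphi\neq\varphi_0$, $\chi\neq\chi_0$ I would evaluate $|G_q(\varphi,\chi)|^2=G_q(\varphi,\chi)\,\overline{G_q(\varphi,\chi)}$. Since the values of $\varphi$ and $\chi$ are roots of unity we have $\overline{\varphi(x)}=\varphi(x^{-1})$ and $\overline{\chi(x)}=\chi(-x)$, whence
$$|G_q(\varphi,\chi)|^2=\sum_{x,y\in\Ff_q^*}\varphi(xy^{-1})\,\chi(x-y).$$
Writing $x=yz$ with $z\in\Ff_q^*$ turns this into $\sum_{z\in\Ff_q^*}\varphi(z)\sum_{y\in\Ff_q^*}\chi\big(y(z-1)\big)$; the inner sum is $q-1$ if $z=1$ and, by additive orthogonality applied to the nonzero element $z-1$, equals $-1$ if $z\neq1$. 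Therefore
$$|G_q(\varphi,\chi)|^2=(q-1)\varphi(1)-\!\!\sum_{z\in\Ff_q^*,\,z\neq1}\!\!\varphi(z)=(q-1)-\Big(\sum_{z\in\Ff_q^*}\varphi(z)-\varphi(1)\Big)=(q-1)-(0-1)=q,$$
using multiplicative orthogonality once more together with $\varphi(1)=1$. Hence $|G_q(\varphi,\chi)|=\sqrt q$.

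The whole argument is routine — it is exactly the proof of \cite[Theorem 5.11]{field} — and no deeper input is needed because only the modulus of the Gauss sum is asserted, not its exact value. The single place that requires care is the bookkeeping in the generic case: conjugating the two characters correctly and performing the substitution $x=yz$ so that the argument of $\chi$ factors as $y(z-1)$, which is precisely what lets additive orthogonality collapse the inner sum and reduce everything to the two orthogonality relations already in hand.
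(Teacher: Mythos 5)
Your argument is correct and complete: the three degenerate cases follow from the orthogonality relations exactly as you say, and the substitution $x=yz$ in $|G_q(\varphi,\chi)|^2$ correctly collapses the double sum to $q$. The paper itself offers no proof of this lemma — it is quoted directly from Lidl and Niederreiter \cite[Theorem 5.11]{field} — and your computation is precisely the standard textbook argument for that result, so there is nothing to reconcile.
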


\begin{lem}{\rm \cite{field}}\label{gauss1}
Gauss sums for the finite field $\Ff_q$ satisfy the following properties:

(1)$G_q(\varphi,\chi_{ab})=\overline{\varphi(a)}G_q(\varphi,\chi_{b})$ for $a\in \Ff_q^*$, $b\in \Ff_q$, where $\overline{\varphi(a)}$ denotes the complex conjugate of $\varphi(a)$;

(2)$G_q(\varphi,\chi)G_q(\overline{\varphi},\chi)=\varphi(-1)q$,

\end{lem}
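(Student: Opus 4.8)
The plan is to obtain both identities directly from the definition $G_q(\varphi,\chi)=\sum_{x\in\Ff_q^*}\varphi(x)\chi(x)$, using only the two orthogonality relations recalled above and Lemma~\ref{gauss}.

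For part (1), I would write $\chi_{ab}(x)=\zeta_p^{\Tr_{q/p}(abx)}=\chi_b(ax)$ and perform the change of variable $y=ax$, which is a bijection of $\Ff_q^*$ onto itself since $a\in\Ff_q^*$. Under it $\varphi(x)=\varphi(a^{-1}y)=\varphi(a)^{-1}\varphi(y)=\overline{\varphi(a)}\,\varphi(y)$, using that $\varphi(a)$ is a root of unity, hence $G_q(\varphi,\chi_{ab})=\overline{\varphi(a)}\sum_{y\in\Ff_q^*}\varphi(y)\chi_b(y)=\overline{\varphi(a)}\,G_q(\varphi,\chi_b)$. This argument needs no restriction on $\varphi$ or on $b$, so the edge cases $\varphi=\varphi_0$ and $b=0$ require no separate treatment.

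For part (2), I would assume $\varphi\ne\varphi_0$ and $\chi\ne\chi_0$ (the stated formula is false otherwise, e.g. for $\varphi=\varphi_0$), expand $G_q(\varphi,\chi)\,G_q(\overline\varphi,\chi)=\sum_{x,y\in\Ff_q^*}\varphi(xy^{-1})\chi(x+y)$, and substitute $x=yz$ with $z\in\Ff_q^*$ to get $\sum_{z\in\Ff_q^*}\varphi(z)\sum_{y\in\Ff_q^*}\chi\big(y(z+1)\big)$. The inner sum equals $q-1$ when $z=-1$ and $-1$ when $z\ne-1$, by the additive orthogonality relation; plugging this in and then using $\sum_{z\in\Ff_q^*}\varphi(z)=0$ (this is where $\varphi\ne\varphi_0$ is essential, and precisely why the identity breaks for the trivial multiplicative character) leaves $\varphi(-1)(q-1)+\varphi(-1)=\varphi(-1)q$. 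A shorter variant first proves $\overline{G_q(\varphi,\chi)}=\varphi(-1)\,G_q(\overline\varphi,\chi)$ via the substitution $x\mapsto-x$ together with $\varphi(-1)^2=1$, and then combines it with $|G_q(\varphi,\chi)|^2=q$ from Lemma~\ref{gauss}. The computations throughout are routine; the only point deserving real attention is making the hypotheses of part (2) explicit and keeping track of exactly where $\varphi\ne\varphi_0$ is used.
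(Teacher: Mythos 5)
Your proposal is correct: the paper states this lemma without proof, citing Lidl--Niederreiter, and both of your arguments (the double-sum computation with the substitution $x=yz$, and the shorter variant via $\overline{G_q(\varphi,\chi)}=\varphi(-1)G_q(\overline{\varphi},\chi)$ combined with $|G_q(\varphi,\chi)|^2=q$) are the standard textbook proofs. You are also right to flag that part (2) as stated omits the hypotheses $\varphi\neq\varphi_0$ and $\chi\neq\chi_0$, without which the identity fails; this is consistent with how the lemma is actually used later in the paper, where it is only invoked for nontrivial characters.
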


\begin{lem}{\rm \cite{field}}\label{gauss2}
Let $\varphi$ be a multiplicative character of $\Ff_q$, then for any $c\in \Ff_q^*$, we have
$$
\varphi(c)=\frac{1}{q}\sum_{\chi}G_q(\varphi,\overline{\chi})\chi(c),
$$
where the sum is extended over all additive characters $\chi$ of $\Ff_q$.

\end{lem}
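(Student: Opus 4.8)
The plan is to prove this as an instance of Fourier inversion on the additive group $(\Ff_q,+)$, reducing everything to the orthogonality relation for additive characters recalled above. First I would parametrize the dual group: every additive character of $\Ff_q$ is $\chi_b$ for a unique $b\in\Ff_q$, and since $\chi_b$ takes values in roots of unity, $\overline{\chi_b}=\chi_{-b}$. Thus the sum $\sum_\chi$ on the right-hand side becomes $\sum_{b\in\Ff_q}$, and substituting the definition $G_q(\varphi,\chi_{-b})=\sum_{x\in\Ff_q^*}\varphi(x)\chi_{-b}(x)$ turns the right-hand side into
$$\frac{1}{q}\sum_{b\in\Ff_q}\Bigl(\sum_{x\in\Ff_q^*}\varphi(x)\chi_{-b}(x)\Bigr)\chi_b(c).$$

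Next I would interchange the two finite summations and pull $\varphi(x)$ outside, obtaining $\frac{1}{q}\sum_{x\in\Ff_q^*}\varphi(x)\bigl(\sum_{b\in\Ff_q}\chi_{-b}(x)\chi_b(c)\bigr)$. Using $\chi_{-b}(x)\chi_b(c)=\zeta_p^{\Tr_{q/p}(b(c-x))}=\chi_{c-x}(b)$, the inner sum is $\sum_{b\in\Ff_q}\chi_{c-x}(b)$, which by the orthogonality relation for additive characters equals $q$ when $c-x=0$ and $0$ otherwise. Hence the whole expression collapses to $\frac{1}{q}\cdot q\cdot\varphi(c)=\varphi(c)$, where it is essential that $c\in\Ff_q^*$ so that the surviving index $x=c$ actually occurs in the outer sum over $x\in\Ff_q^*$; this is precisely the hypothesis of the lemma.

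I do not expect a genuine obstacle here: the argument is a one-line orthogonality computation, and the only points requiring care are the bookkeeping with complex conjugates of characters ($\overline{\chi_b}=\chi_{-b}$) and observing that the derivation is uniform in $\varphi$, valid whether or not $\varphi$ is the trivial multiplicative character. As an alternative route, one could invoke Lemma~\ref{gauss1}(1) to write $G_q(\varphi,\overline{\chi_b})=\overline{\varphi(-b)}\,G_q(\varphi,\chi_1)$ for $b\neq 0$ and handle $b=0$ separately via the evaluations in Lemma~\ref{gauss}, but the direct computation above avoids that case split entirely, so that is the version I would write out.
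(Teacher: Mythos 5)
Your argument is correct: parametrizing the additive characters as $\chi_b$, using $\overline{\chi_b}=\chi_{-b}$, interchanging the finite sums, and collapsing the inner sum $\sum_{b\in\Ff_q}\chi_{c-x}(b)$ via the orthogonality relation is exactly the standard proof of this Fourier expansion, and you correctly flag where the hypothesis $c\in\Ff_q^*$ is used. The paper itself gives no proof, citing Lidl--Niederreiter, whose argument is the same orthogonality computation, so there is nothing to compare beyond noting that your write-up matches the canonical one.
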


This is the Fourier expansion of the multiplicative character $\varphi$ in terms  of the additive characters of $\Ff_q,$ with Gaussian sums appearing as Fourier coefficients.

\subsection{A general construction of codebooks}
Let $D$ be a set and $\#D=K$. Let $E$ be a set of some functions which satisfy
$$f:D\rightarrow S,\ \ \hbox{where\  S\  is\  the\ unit\ circle.}$$
A general construction of codebooks is stated as follows in the complex plane,
$$\cc(D;E)=\{\cc_f:=\frac{1}{\sqrt{K}}(f(x))_{x\in D}, f\in E\}.$$

\section{The first construction of codebooks}
In this section, by multiplicative characters, transitivity of trace and fourier expansion, we construct new series of codebooks  which nearly meet the Welch bound.

We first recall the construction in \cite{NM5}. Let

$$
D=\{x\in \Ff_{q}^*: \eta(x+1)=-1\},
$$
where $\eta$ is the quadratic multiplicative character of $\Ff_q$, $\eta(0)$ is defined as 0 for convenience. Let $\#D=K$ and $E=\widehat{\Ff_q^*}$.

The codebooks $C(D;E)=\{\cF_{\varphi}=\frac{1}{\sqrt{K}}(\varphi(x))_{x\in D}: \varphi\in E\}$, which are constructed in \cite{NM5} are asymptotically optimal when $q\rightarrow \infty$.

In this section, we generalize this construction in \cite{NM5} by the transitivity of trace functions.

Let $p$ be an odd prime, $r=p^t$  and $q=r^s$, where $p\nmid s$.  Let $\eta$ be the quadratic multiplicative character of $\Ff_{r}$. Let

$$
D=\{x\in \Ff_{q}^*: \eta(\Tr_{q/r}(x+1))=-\eta(s)\},
$$
and $\#D=K$. Let $E=\widehat{\Ff_q^*}$.

We define a codeword of length $K$ as
$$
\cF_{\varphi}=\frac{1}{\sqrt{K}}(\varphi(x))_{x\in D},\ \varphi\in E
$$
and construct the following $(N,K)$ codebook $\cc(D;E)$ as:
\begin{equation}\label{con2}
\cc(D;E)=\{\cF_{\varphi}=\frac{1}{\sqrt{K}}(\varphi(x))_{x\in D}: \varphi\in E\}.
\end{equation}

We set
$$
 \delta_1(x)=\left\{
            \begin{array}{ll}
              \frac{1-\eta(s)\eta(\Tr_{q/r}(x+1))}{2},& \hbox{if\ $\Tr_{q/r}(x+1)\neq$ 0,}\\
              0,& \hbox{if\ $\Tr_{q/r}(x+1)=$ 0.}
            \end{array}
          \right.
$$
Through the definition of $D$, we know that

$$
 \delta_1(x)=\left\{
            \begin{array}{ll}
              1, & \hbox{if\ $x\in D$,} \\
              0,& \hbox{otherwise.}
            \end{array}
          \right.
$$

\begin{thm}\label{th51}
With the above notation, we have $N=q-1$, $K=\frac{q(r-1)}{2r}$ and
$$
I_{max}(\cc(D;E))\leq\frac{\sqrt{r}}{\sqrt{q}(\sqrt{r}-1)}<1.
$$

\end{thm}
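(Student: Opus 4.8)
The plan is to compute $N$, $K$, and $I_{max}$ directly from the definitions, using the Fourier expansion of multiplicative characters (Lemma~\ref{gauss2}) together with the transitivity of trace. First I would handle the easy part: $N=\#E=\#\widehat{\Ff_q^*}=q-1$. For $K=\#D$, I would sum $\delta_1(x)$ over $x\in\Ff_q^*$. Writing $\delta_1(x)=\tfrac12\bigl(1-\eta(s)\eta(\Tr_{q/r}(x+1))\bigr)$ whenever $\Tr_{q/r}(x+1)\neq 0$, the main term contributes roughly $\tfrac12 q$, and the correction $\sum_x \eta(\Tr_{q/r}(x+1))$ must be evaluated. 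Here transitivity helps: as $x$ ranges over $\Ff_q^*$, one controls how often $\Tr_{q/r}(x+1)$ lands in each coset of $\Ff_r$, and the character sum $\sum_{y\in\Ff_r}\eta(y)$ vanishes, leaving only boundary corrections from the excluded values $x=0$ and $\Tr_{q/r}(x+1)=0$. The condition $p\nmid s$ should guarantee $\eta(s)$ is well-defined (i.e. $s\not\equiv 0$) and enters so that $\eta(s)\eta(\Tr_{q/r}(x+1))=\eta(s\cdot\Tr_{q/r}(x+1))$, which is the quantity that appears naturally when one writes $\Tr_{q/r}$ of an element of $\Ff_r$ itself. Carrying this out should give $K=\tfrac{q(r-1)}{2r}$ exactly.

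The heart of the argument is bounding $|\cF_{\varphi_1}\cF_{\varphi_2}^H|$ for $\varphi_1\neq\varphi_2$. By definition this equals $\tfrac1K\bigl|\sum_{x\in D}\varphi_1(x)\overline{\varphi_2(x)}\bigr|=\tfrac1K\bigl|\sum_{x\in\Ff_q^*}\varphi(x)\delta_1(x)\bigr|$ where $\varphi=\varphi_1\overline{\varphi_2}$ is a nontrivial multiplicative character of $\Ff_q$. Substituting the formula for $\delta_1$, the "$1$" part gives $\sum_{x\in\Ff_q^*}\varphi(x)=0$ by orthogonality of multiplicative characters, so we are left with
$$
\Bigl|\sum_{x\in\Ff_q^*}\varphi(x)\delta_1(x)\Bigr|
=\frac12\Bigl|\eta(s)\sum_{\substack{x\in\Ff_q^*\\ \Tr_{q/r}(x+1)\neq 0}}\varphi(x)\,\eta(\Tr_{q/r}(x+1))\Bigr|.
$$
Now I would expand $\eta(\Tr_{q/r}(x+1))$ via Lemma~\ref{gauss2} over $\Ff_r$: $\eta(z)=\tfrac1r\sum_{\psi\in\widehat{\Ff_r}}G_r(\eta,\overline\psi)\psi(z)$ for $z\in\Ff_r^*$, with $\psi(z)=\zeta_p^{\Tr_{r/p}(cz)}$. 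Then $\psi(\Tr_{q/r}(x+1))=\zeta_p^{\Tr_{r/p}(c\,\Tr_{q/r}(x+1))}=\zeta_p^{\Tr_{q/p}(c(x+1))}$ by transitivity — this is exactly the step that turns a trace-to-$\Ff_r$ condition into an additive character of $\Ff_q$ evaluated at $x$. So the inner sum becomes a combination of Gauss sums $G_q(\varphi,\chi_c)$ over $\Ff_q$ (picking up a factor $\chi_c(1)$), each of absolute value $\sqrt q$ when the relevant character is nontrivial, together with a correction term coming from the excluded set $\{\Tr_{q/r}(x+1)=0\}$ and the point $x=0$. Assembling, one expects a bound of the shape $\tfrac12\cdot\tfrac1r\cdot(r-1)\cdot\sqrt q$ plus lower-order terms, and dividing by $K=\tfrac{q(r-1)}{2r}$ yields $I_{max}\le \sqrt r/(\sqrt q(\sqrt r-1))$; the final inequality $<1$ then follows since $q=r^s\ge r^2>r$ forces $\sqrt q(\sqrt r-1)>\sqrt r$.

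The main obstacle I anticipate is the careful bookkeeping of the "degenerate" contributions: the terms with $c=0$ in the Fourier expansion (which reproduce part of the count of $D$), the terms where $\varphi\cdot\chi_c$ becomes trivial or where $\varphi=\varphi_0$ on a sub-piece, and above all the subtraction of the locus $\Tr_{q/r}(x+1)=0$, which is an affine $\Ff_r$-hyperplane-type condition in $\Ff_q$ and itself produces a character sum that must be estimated (again by Fourier expanding the indicator of $\Tr_{q/r}(x+1)=0$ and using transitivity). Keeping track of these boundary terms so that they genuinely are lower order compared to the $(r-1)\sqrt q/(2r)$ main term — and in particular that the constant in front comes out to exactly match the claimed $\sqrt r/(\sqrt q(\sqrt r-1))$ after division by $K$ — is where the real work lies. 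The role of the hypothesis $p\nmid s$ is to ensure $\eta(s)\neq 0$ and that the trace map $\Tr_{q/r}$ restricted to $\Ff_r\subseteq\Ff_q$ acts as multiplication by $s$ (so $\Tr_{q/r}(a)=sa$ for $a\in\Ff_r$), which is precisely what makes the $\eta(s)$ factor in the definition of $D$ the correct normalization; I would flag this explicitly at the start of the proof.
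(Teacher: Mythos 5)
Your overall strategy is the same as the paper's: write the inner product as $\frac1K\sum_{x\in\Ff_q^*}\varphi(x)\delta_1(x)$ with $\varphi=\varphi_i\overline{\varphi_j}$ nontrivial, split off the character sum $A=\sum_{x\in\Ff_q^*}\varphi(x)\eta(\Tr_{q/r}(x+1))$ and the boundary sum $B=\sum_{\Tr_{q/r}(x+1)=0}\varphi(x)$, and evaluate both by Fourier-expanding $\eta$ over $\Ff_r$ (Lemma \ref{gauss2}) and using transitivity of the trace to turn $\psi_b(\Tr_{q/r}(x+1))$ into $\chi(bx)\chi(b)$. (One small slip: after applying orthogonality, the ``$1$'' part of $\delta_1$ does not vanish outright but contributes exactly $-B/2$, because the sum is restricted to $\Tr_{q/r}(x+1)\neq 0$; you do flag this correction in prose, so this is only a bookkeeping issue.)

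The genuine gap is in the assembly of the Gauss sums, which is exactly where the claimed constant comes from. After the Fourier expansion one arrives at $\frac1r\sum_{b\in\Ff_r^*}G_r(\eta,\overline{\psi_b})\,G_q(\varphi,\chi_b)\,\chi(b)$. Bounding this term by term (``each of absolute value $\sqrt q$'') gives roughly $\frac{r-1}{r}\sqrt{rq}\approx\sqrt{rq}$, which after dividing by $2K$ overshoots the claimed bound by a factor of about $\sqrt r$; your heuristic main term $\frac12\cdot\frac1r\cdot(r-1)\sqrt q$ is likewise not what this sum actually yields, and dividing it by $K$ would give $1/\sqrt q$, not $\frac{\sqrt r}{\sqrt q(\sqrt r-1)}$. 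The paper's Lemma \ref{lem52} instead uses the twisting identities of Lemma \ref{gauss1}, namely $G_r(\eta,\overline{\psi_b})=\overline{\eta(-b)}\,G_r(\eta,\overline{\psi})$ and $G_q(\varphi,\chi_b)=\overline{\varphi(b)}\,G_q(\varphi,\chi)$, to extract the $b$-dependence as multiplicative characters of $b$, so that the sum over $b$ collapses into a third Gauss sum $G_r(\eta\overline{\varphi}^*,\chi^*)$ over $\Ff_r$, where $\varphi^*$ and $\chi^*$ denote restrictions to $\Ff_r$. This second square-root cancellation is what gives $|A|\le\sqrt q$ and $|B|\le\sqrt q/\sqrt r$, whence $\frac{|A|+|B|}{2K}\le\frac{\sqrt r}{\sqrt q(\sqrt r-1)}$. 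It is also here, and not merely in the normalization $\eta(s)$, that the hypothesis $p\nmid s$ is essential: Lemma \ref{lem51} shows $\chi^*=\psi_s$, so $p\nmid s$ guarantees the restricted additive character is nontrivial and the Gauss sums over $\Ff_r$ have modulus $\sqrt r$ (or $1$) rather than $r-1$. Without identifying this collapse your argument does not reach the stated bound. (A last minor point: your justification of the final inequality ``$<1$'' assumes $q\ge r^2$, but $s=1$ is permitted.)
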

\begin{proof}
By the definition of $D$ and $E$, it is easy to see that $N=q-1$ and $K=\frac{q(r-1)}{2r}$.

For any characters $\varphi_i$ and $\varphi_j$ in $\widehat{\Ff_q^*}$, where $1\leq i \neq j\leq q-2$. Let $\varphi:=\varphi_i\overline{\varphi_j}$, since $i\neq j$, $\varphi$ is nontrival. Then we have

\begin{eqnarray*}
&&K(\cF_{\varphi_i}\cF_{\varphi_j}^H)\\
&=&\sum_{x\in D}\varphi_i(x)\overline{\varphi_j(x)}\\
&=&\sum_{x\in D}\varphi(x)\\
&=&\sum_{x\in \Ff_q^*}\varphi(x)\delta_1(x)\\
&=&\sum_{x\in \Ff_q^*,\atop\Tr_{q/r}(x+1)\neq 0}\varphi(x)\frac{1-\eta(s)\eta(Tr_{q/r}(x+1))}{2} \\
&=& \sum_{x\in \Ff_q^*}\varphi(x)\frac{1-\eta(s)\eta(Tr_{q/r}(x+1))}{2}\\
&&-\sum_{x\in \Ff_q^*,\atop\Tr_{q/r}(x+1)= 0}\varphi(x)\frac{1-\eta(s)\eta(Tr_{q/r}(x+1))}{2}\\
&=&\frac{1}{2}\sum_{x\in \Ff_q^*}\varphi(x)\eta(Tr_{q/r}(x+1))-\frac{1}{2}\sum_{x\in \Ff_q^*,\atop\Tr_{q/r}(x+1)= 0}\varphi(x)\\
&=&-\frac{1}{2}\eta(s)A-\frac{1}{2}B,
\end{eqnarray*}
where $A$ and $B$ are defined in Lemma \ref{lem52} and Lemma \ref{lem53}.

By the results in Lemma \ref{lem52} and Lemma \ref{lem53}, we get

\begin{eqnarray*}
I_{max}(\cc(D;E)) &=& max\{|\cF_{\varphi_i}\cF_{\varphi_j}^H|:1\leq i \neq j\leq q-2\}\\
&\leq&\frac{|A|+|B|}{2K}\leq \frac{\sqrt{r}}{\sqrt{q}(\sqrt{r}-1)}<1.
\end{eqnarray*}

\end{proof}

Using Theorem \ref{th51}, we can derive the ratio of $I_{max}(\cc(D;E))$ of the proposed codebooks to that of the MWBE codebooks and show the near-optimality of the proposed codebooks as in the following theorem.

\begin{thm}\label{th52}
Let $I_W$ be the magnitude of the optimal correlation, i.e. the Welch bound equality, for the given $N$, $K$ in the current section. We have

$$\lim_{r\rightarrow\infty}\frac{I_{max}(\cc(D;E))}{I_W}=1,$$
then the codebooks we proposed are near optimal.

\end{thm}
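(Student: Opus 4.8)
The plan is to combine Theorem \ref{th51} with the exact value of the Welch bound for the parameters $N=q-1$, $K=\frac{q(r-1)}{2r}$, and then take asymptotics. First I would write out
$$
I_W=\sqrt{\frac{N-K}{(N-1)K}}=\sqrt{\frac{(q-1)-\frac{q(r-1)}{2r}}{(q-2)\cdot\frac{q(r-1)}{2r}}},
$$
and simplify the numerator: $(q-1)-\frac{q(r-1)}{2r}=\frac{2r(q-1)-q(r-1)}{2r}=\frac{q(r+1)-2r}{2r}$. Hence
$$
I_W=\sqrt{\frac{q(r+1)-2r}{q(q-2)(r-1)}}.
$$
Since $q=r^s$ with $s\ge 1$ fixed (or bounded) while $r\to\infty$, the dominant term in the numerator is $qr$ and in the denominator $q^2 r$, so $I_W\sim\sqrt{qr/(q^2 r)}=1/\sqrt{q}$; more precisely $I_W=\frac{1}{\sqrt{q}}\bigl(1+o(1)\bigr)$.

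Next I would record the upper bound from Theorem \ref{th51}, namely $I_{max}(\cc(D;E))\le \frac{\sqrt r}{\sqrt q(\sqrt r-1)}$, and observe that $\frac{\sqrt r}{\sqrt q(\sqrt r-1)}=\frac{1}{\sqrt q}\cdot\frac{\sqrt r}{\sqrt r-1}=\frac{1}{\sqrt q}\bigl(1+o(1)\bigr)$ as $r\to\infty$. Therefore
$$
\frac{I_{max}(\cc(D;E))}{I_W}\le \frac{\tfrac{1}{\sqrt q}\bigl(1+o(1)\bigr)}{\tfrac{1}{\sqrt q}\bigl(1+o(1)\bigr)}=1+o(1).
$$
Combined with the trivial lower bound $I_{max}(\cc(D;E))/I_W\ge 1$ coming from the Welch bound itself, a sandwich argument forces $\lim_{r\to\infty} I_{max}(\cc(D;E))/I_W=1$, which is the claim.

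The only genuinely delicate point is bookkeeping the role of $s$: one must be careful that the asymptotic statement is taken with $r\to\infty$ for a fixed exponent $s$ (equivalently, $q=r^s$ grows together with $r$), since otherwise the ratio $\sqrt r/(\sqrt r-1)\to 1$ argument and the estimate $I_W=\tfrac{1}{\sqrt q}(1+o(1))$ need the lower-order terms $q(r+1)-2r$ versus $q(q-2)(r-1)$ to behave as claimed; I would make explicit that both the upper bound and $I_W$ are $\frac{1}{\sqrt q}(1+o(1))$ with the $o(1)$ controlled purely by $r\to\infty$. Everything else is routine algebraic simplification, so I expect no real obstacle beyond stating the limit regime cleanly.
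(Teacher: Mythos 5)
Your proposal is correct and follows essentially the same route as the paper: both start from the upper bound $I_{max}\le\frac{\sqrt r}{\sqrt q(\sqrt r-1)}$ of Theorem \ref{th51}, compute $I_W=\sqrt{\frac{q(r+1)-2r}{q(q-2)(r-1)}}$ explicitly, and squeeze the ratio to $1$ as $r\to\infty$ (the paper writes the quotient as $\frac{\sqrt{1-2/q}}{(1-1/\sqrt r)\sqrt{1-2/q+1/r}}$ rather than using $\frac{1}{\sqrt q}(1+o(1))$ asymptotics, but the content is identical). Your worry about fixing $s$ is harmless but unnecessary: since $q=r^s\ge r$, both $r\to\infty$ and $q\to\infty$ hold automatically and the error terms vanish uniformly in $s$.
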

\begin{proof}
Note that $N=q-1$ and $K=\frac{q(r-1)}{2r}$. Then the corresponding Welch bound is
$$
I_W=\sqrt{\frac{N-K}{(N-1)K}}=\sqrt{\frac{qr-2r+q}{q(q-2)(r-1)}}.
$$
It can easily be seen that
\begin{eqnarray*}
1\leq\frac{I_{max}(\cc(D;E))}{I_W}&\leq&\frac{\sqrt{r}}{\sqrt{q}(\sqrt{r}-1)}\frac{\sqrt{q(q-2)(r-1)}}{\sqrt{qr-2r+q}}\\
&<&\frac{r\sqrt{q-2}}{(\sqrt{r}-1)\sqrt{qr-2r+q}}\\
&=&\frac{\sqrt{1-\frac{2}{q}}}{(1-\frac{1}{\sqrt{r}})\sqrt{1-\frac{2}{q}+\frac{1}{r}}}.
\end{eqnarray*}

Since $r=p^t$ and $q=r^s$, it follows that $\lim_{r\rightarrow+\infty }\frac{I_{max}(\cc(D;E))}{I_W}=1$. The codebook $\cc(D;E)$ asymptotically meets the Welch bound. This completes the proof.
\end{proof}
\begin{remark}
When $r=q$, it is exactly the first construction of codebooks in \cite{NM5}.
\end{remark}

The Lemmas  which are used in the proof of Theorem \ref{th51} are as follows.

Let $\varphi$ be a  multiplicative character of $\Ff_{q}$ and $\varphi^*$  the restriction of $\varphi$ to $\F_{r}$. Let $\chi_a$ be an  additive character of $\Ff_{q}$, and $\chi_a^*$  the restriction of $\chi_a$ to $\F_{r}$, where $a\in \Ff_q$.  Let $\psi_b$ be an  additive character of $\Ff_{r}$, where $b\in \Ff_r$. Let  $\chi=\chi_1$ and $\psi=\psi_1$ be the canonical additive characters of $\Ff_{q}$ and $\Ff_{r}$ respectively.

\begin{lem}\label{lem51}
$\chi^*$  is nontrivial if and only if $p\nmid s$.
\end{lem}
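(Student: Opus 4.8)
The plan is to analyze $\chi^*$, the restriction to $\Ff_r$ of the canonical additive character $\chi=\chi_1$ of $\Ff_q$. By definition, for $x\in\Ff_r$ we have $\chi^*(x)=\chi_1(x)=\zeta_p^{\Tr_{q/p}(x)}$. The key observation is the transitivity of trace, which gives $\Tr_{q/p}(x)=\Tr_{r/p}(\Tr_{q/r}(x))$. Since $x\in\Ff_r$, the inner trace $\Tr_{q/r}(x)=x+x^r+\cdots+x^{q/r}$ collapses: each term $x^{r^i}=x$ because $x^r=x$ for $x\in\Ff_r$, so $\Tr_{q/r}(x)=sx$, where $s=[\Ff_q:\Ff_r]$ is reduced modulo $p$ (i.e., we view $s\cdot 1\in\Ff_p\subseteq\Ff_r$). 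Hence $\chi^*(x)=\zeta_p^{\Tr_{r/p}(sx)}=\psi_s(x)$, the additive character of $\Ff_r$ associated to the element $s\in\Ff_r$ (more precisely $s\bmod p$).

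With this identification, $\chi^*=\psi_s$ is trivial if and only if $s\equiv 0\pmod p$, i.e. if and only if $p\mid s$. Equivalently, $\chi^*$ is nontrivial if and only if $p\nmid s$, which is exactly the claim. So the whole proof reduces to the two short computations above: first simplifying $\Tr_{q/r}$ on elements of the base field $\Ff_r$ to multiplication by $s$, and then invoking transitivity to write $\chi^*$ as an explicit additive character $\psi_s$ of $\Ff_r$.

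I do not anticipate a genuine obstacle here; the only point requiring a little care is the bookkeeping about where $s$ lives. The integer $s$ is the degree of the extension, but in the exponent it only matters modulo $p$, so one should phrase the argument as: $\Tr_{q/r}(x)=\underbrace{x+\cdots+x}_{s\text{ times}}=s\cdot x$ in $\Ff_r$, where $s\cdot x$ means the $s$-fold sum, which equals $(s\bmod p)\cdot x$. Then $s\cdot x=0$ for all $x\in\Ff_r$ precisely when $p\mid s$. This is where the standing hypothesis $p\nmid s$ in the construction is actually used, and it is worth noting explicitly that this lemma is the reason that hypothesis is imposed. After this, the orthogonality relation for additive characters (or simply the observation that $\psi_b$ is trivial iff $b=0$) finishes the equivalence.
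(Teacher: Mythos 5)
Your proof is correct and follows essentially the same route as the paper's: both reduce $\Tr_{q/r}$ on $\Ff_r$ to multiplication by $s$ (the paper writes this as $\Tr_{q/r}(b)=b\Tr_{q/r}(1)=bs$), identify $\chi^*=\psi_s$ via transitivity of trace, and conclude that $\chi^*$ is trivial precisely when $p\mid s$. Your added remark about $s$ being read modulo $p$ is a worthwhile clarification but does not change the argument.
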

\begin{proof}
For any $b\in \Ff_r$, we have
\begin{eqnarray*}
\chi^*(b)&=&\chi(b)=\zeta_p^{\Tr_{q/p}b}=\zeta_p^{\Tr_{r/p}\Tr_{q/r}b}=\zeta_p^{\Tr_{r/p}b\Tr_{q/r}1}\\
&=&\zeta_p^{\Tr_{r/p}(bs)}=\psi(bs)=\psi_s(b).
\end{eqnarray*}
It follows that $\chi^*=\psi_s$, the result is obtained.
\end{proof}

\begin{lem}\label{lem52}
Let $A=\sum_{x\in \Ff_q^*}\varphi(x)\eta(Tr_{q/r}(x+1))$, then we have
$$
|A|=\left\{
            \begin{array}{ll}
              \sqrt{q},& \hbox{if\ $\eta\overline{\varphi}^*$ is nontrivial,}\\
              \frac{\sqrt{q}}{\sqrt{r}},& \hbox{if\ $\eta\overline{\varphi}^*$ is trivial.}
            \end{array}
          \right.
$$
\end{lem}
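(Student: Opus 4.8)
The plan is to evaluate the character sum $A=\sum_{x\in\Ff_q^*}\varphi(x)\eta(\Tr_{q/r}(x+1))$ by expanding the quadratic character $\eta$ of $\Ff_r$ through its Fourier expansion in additive characters of $\Ff_r$ (Lemma~\ref{gauss2}), then pushing the additive characters of $\Ff_r$ up to additive characters of $\Ff_q$ via transitivity of trace, and finally recognizing the resulting inner sum as a Gauss sum over $\Ff_q$ whose modulus is controlled by Lemma~\ref{gauss}. First I would write $\eta(\Tr_{q/r}(x+1))=\frac{1}{r}\sum_{\psi_b}G_r(\eta,\overline{\psi_b})\,\psi_b(\Tr_{q/r}(x+1))$, where the sum runs over all additive characters $\psi_b$ of $\Ff_r$ (here one must be slightly careful: this identity holds for nonzero argument, so the terms with $\Tr_{q/r}(x+1)=0$ have to be tracked separately, but on those $\eta$ evaluates to $0$ by convention, matching the vanishing of $G_r(\eta,\psi_0)$ in Lemma~\ref{gauss}, so the Fourier formula extends correctly).

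Next, using $\psi_b(\Tr_{q/r}(y))=\psi(b\,\Tr_{q/r}(y))=\chi(b\,y)$ by transitivity of trace (the same computation as in Lemma~\ref{lem51}), I substitute $y=x+1$ to get
$$
A=\frac{1}{r}\sum_{b\in\Ff_r}G_r(\eta,\overline{\psi_b})\,\chi(b)\sum_{x\in\Ff_q^*}\varphi(x)\chi(bx).
$$
The inner sum over $x\in\Ff_q^*$ is $G_q(\varphi,\chi_b)$ when $b\neq 0$ and is $\sum_{x\in\Ff_q^*}\varphi(x)=0$ when $b=0$ (since $\varphi$ is nontrivial). For $b\neq 0$, Lemma~\ref{gauss1}(1) gives $G_q(\varphi,\chi_b)=\overline{\varphi(b)}\,G_q(\varphi,\chi)$, so
$$
A=\frac{G_q(\varphi,\chi)}{r}\sum_{b\in\Ff_r^*}G_r(\eta,\overline{\psi_b})\,\overline{\varphi(b)}\,\chi(b).
$$
Now $\chi(b)=\psi_s(b)$ for $b\in\Ff_r$ by Lemma~\ref{lem51}, and $\overline{\varphi(b)}=\overline{\varphi^*(b)}$ since $b\in\Ff_r^*$; also $G_r(\eta,\overline{\psi_b})=\eta(b)\,G_r(\eta,\overline{\psi_1})=\eta(b)\,G_r(\eta,\psi)$ up to the standard sign (using $\overline\eta=\eta$ and Lemma~\ref{gauss1}(1)). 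Collecting, the $b$-sum becomes $G_r(\eta,\psi)\sum_{b\in\Ff_r^*}\eta(b)\overline{\varphi^*(b)}\psi_s(b)$, and after the substitution $b\mapsto b/s$ this is $\overline{\varphi^*(s)}\,\eta(s)^{-1}\cdot G_r(\eta,\psi)\cdot G_r(\eta\overline{\varphi^*},\psi)$ (again up to a sign absorbed by $\eta(-1)$-type factors). Therefore $|A|=\frac{|G_q(\varphi,\chi)|\,|G_r(\eta,\psi)|\,|G_r(\eta\overline{\varphi^*},\psi)|}{r}$.

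Finally I would invoke Lemma~\ref{gauss} to finish: $|G_q(\varphi,\chi)|=\sqrt{q}$ since $\varphi$ is nontrivial and $\chi$ nontrivial, $|G_r(\eta,\psi)|=\sqrt{r}$ since $\eta$ is nontrivial quadratic, and for the last factor there are two cases — if $\eta\overline{\varphi^*}$ is nontrivial then $|G_r(\eta\overline{\varphi^*},\psi)|=\sqrt{r}$, giving $|A|=\sqrt{q}\cdot\sqrt{r}\cdot\sqrt{r}/r=\sqrt{q}$; if $\eta\overline{\varphi^*}$ is trivial then $G_r(\eta\overline{\varphi^*},\psi)=G_r(\varphi_0,\psi)=-1$, giving $|A|=\sqrt{q}\cdot\sqrt{r}/r=\sqrt{q}/\sqrt{r}$, exactly the two values claimed. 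The main obstacle I expect is bookkeeping rather than conceptual: correctly handling the $\Tr_{q/r}(x+1)=0$ terms when applying the Fourier expansion of $\eta$, and tracking the various sign factors ($\eta(-1)$, $\eta(s)$, conjugations) through the change of variables so they end up contributing only to the phase and not to $|A|$ — the absolute-value statement is robust to these, but one should verify the Fourier-expansion step is legitimate on the whole domain.
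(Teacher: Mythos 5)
Your proposal is correct and follows essentially the same route as the paper: Fourier-expand $\eta$ via Lemma~\ref{gauss2}, lift $\psi_b(\Tr_{q/r}(\cdot))$ to $\chi(b\,\cdot)$ by transitivity of trace, factor out $G_q(\varphi,\chi)$ and $G_r(\eta,\overline{\psi})$ using Lemma~\ref{gauss1}, recognize the remaining $b$-sum as $G_r(\eta\overline{\varphi}^*,\chi^*)$ with $\chi^*=\psi_s$ nontrivial by Lemma~\ref{lem51}, and conclude with Lemma~\ref{gauss}. Your extra care about the $\Tr_{q/r}(x+1)=0$ terms (where the Fourier identity must be checked against the convention $\eta(0)=0$) is a point the paper silently skips, and is handled correctly.
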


\begin{proof}
By the Fourier expansion in Lemma \ref{gauss2}, A can be rewritten as
\begin{eqnarray*}
A&=&\sum_{x\in \Ff_q^*}\varphi(x)\frac{1}{r}\sum_{b\in \Ff_r}G_r(\eta,\overline{\psi_b})\psi_b(\Tr_{q/r}(x+1))\\
&=&\frac{1}{r}\sum_{b\in \Ff_r}G_r(\eta,\overline{\psi_b})\sum_{x\in \Ff_q^*}\varphi(x)\psi(\Tr_{q/r}(bx+b))\\
&=&\frac{1}{r}\sum_{b\in \Ff_r}G_r(\eta,\overline{\psi_b})\sum_{x\in \Ff_q^*}\varphi(x)\chi(bx+b)\\
&=&\frac{1}{r}\sum_{b\in \Ff_r}G_r(\eta,\overline{\psi_b})\sum_{x\in \Ff_q^*}\varphi(x)\chi(bx)\chi(b)\\
&=&\frac{1}{r}\sum_{b\in \Ff_r}G_r(\eta,\overline{\psi_b})G_q(\varphi,\chi_b)\chi(b)\\
&=&\frac{1}{r}\sum_{b\in \Ff_r^*}\overline{\eta(-b)}G_r(\eta,\overline{\psi})\overline{\varphi(b)} G_q(\varphi,\chi)\chi(b)\\
&=&\frac{1}{r}\eta(-1)G_r(\eta,\overline{\psi})G_q(\varphi,\chi)\sum_{b\in \Ff_r^*}\eta(b)\overline{\varphi(b)}\chi(b)\\
&=&\frac{1}{r}\eta(-1)G_r(\eta,\overline{\psi})G_q(\varphi,\chi)G_r(\eta\overline{\varphi}^*,\chi^*).\\
\end{eqnarray*}

Since $p\nmid s$, the result follows from Lemma \ref{gauss} and Lemma \ref{lem51}.
\end{proof}
We set $$
 \delta_2(x)=\frac{1}{r}\sum_{b\in \Ff_r}\psi_b(\Tr_{q/r}(x+1))
$$

It is easy to see that
$$\delta_2(x)=
          \left\{
            \begin{array}{ll}
              1,& \hbox{if\ $\Tr_{q/r}(x+1)=$ 0,}\\
              0,& \hbox{if\ $\Tr_{q/r}(x+1)\neq$ 0.}
            \end{array}
          \right.$$

\begin{lem}\label{lem53}
Let $B=\sum_{x\in \Ff_q^*,\atop\Tr_{q/r}(x+1)= 0}\varphi(x)$, then we have
$$
|B|=\left\{
            \begin{array}{ll}
              \frac{\sqrt{q}}{\sqrt{r}},& \hbox{if\ $\overline{\varphi}^*$ is nontrivial,}\\
              \frac{\sqrt{q}}{r},& \hbox{if\ $\overline{\varphi}^*$ is trivial.}
            \end{array}
          \right.
$$
\end{lem}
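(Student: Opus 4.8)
The plan is to mimic the computation in Lemma \ref{lem52}, but now using the indicator $\delta_2$ instead of the Fourier expansion of $\eta$. First I would write
$$
B=\sum_{x\in \Ff_q^*}\varphi(x)\delta_2(x)
 =\frac{1}{r}\sum_{b\in \Ff_r}\sum_{x\in \Ff_q^*}\varphi(x)\psi_b(\Tr_{q/r}(x+1)),
$$
and then, exactly as in Lemma \ref{lem52}, use $\psi_b(\Tr_{q/r}(x+1))=\chi(bx+b)=\chi(bx)\chi(b)$ together with the definition of the Gauss sum over $\Ff_q$ to get $\sum_{x\in\Ff_q^*}\varphi(x)\chi(bx)=G_q(\varphi,\chi_b)$. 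For $b=0$ the inner sum is $\sum_{x\in\Ff_q^*}\varphi(x)=0$ since $\varphi$ is nontrivial, so the $b=0$ term drops. For $b\in\Ff_r^*$, Lemma \ref{gauss1}(1) gives $G_q(\varphi,\chi_b)=\overline{\varphi(b)}G_q(\varphi,\chi)$, and combining with $\chi(b)=\psi(bs)$ (the computation in Lemma \ref{lem51}, or just $\chi|_{\Ff_r}=\chi^*$) yields
$$
B=\frac{1}{r}\,G_q(\varphi,\chi)\sum_{b\in \Ff_r^*}\overline{\varphi(b)}\,\chi^*(b)
 =\frac{1}{r}\,G_q(\varphi,\chi)\,G_r(\overline{\varphi}^*,\chi^*).
$$

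Next I would take absolute values. By Lemma \ref{gauss}, $|G_q(\varphi,\chi)|=\sqrt q$ because $\varphi$ is nontrivial and $\chi$ is nontrivial. For the factor $G_r(\overline{\varphi}^*,\chi^*)$: since $p\nmid s$, Lemma \ref{lem51} tells us $\chi^*=\psi_s$ is a nontrivial additive character of $\Ff_r$. Hence if $\overline{\varphi}^*$ is nontrivial, Lemma \ref{gauss} gives $|G_r(\overline{\varphi}^*,\chi^*)|=\sqrt r$, so $|B|=\sqrt q\sqrt r/r=\sqrt q/\sqrt r$; if $\overline{\varphi}^*$ is trivial, then $G_r(\varphi_0,\chi^*)=-1$ by Lemma \ref{gauss}, so $|B|=\sqrt q/r$. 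This matches the two cases in the statement.

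I do not expect any serious obstacle here; the only points needing care are the bookkeeping of which characters are trivial (so that the $b=0$ term and the evaluations of the Gauss sums are applied correctly) and the fact, carried over from Lemma \ref{lem51}, that the hypothesis $p\nmid s$ is exactly what guarantees $\chi^*$ is nontrivial so that Lemma \ref{gauss} applies to $G_r(\overline{\varphi}^*,\chi^*)$. Everything else is a direct rerun of the character-sum manipulation already performed for $A$.
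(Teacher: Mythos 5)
Your proposal is correct and follows essentially the same route as the paper: expand the indicator $\delta_2$ over additive characters of $\Ff_r$, reduce to $B=\frac{1}{r}G_q(\varphi,\chi)G_r(\overline{\varphi}^*,\chi^*)$ via Lemma \ref{gauss1}, and evaluate the absolute value using Lemma \ref{gauss} together with the fact from Lemma \ref{lem51} that $p\nmid s$ makes $\chi^*$ nontrivial. In fact you spell out the final case analysis (and the need for $\varphi$ to be nontrivial so the $b=0$ term vanishes) more explicitly than the paper does.
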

\begin{proof}
By the definition of $\delta_2(x)$, we have
\begin{eqnarray*}
B&=&\sum_{x\in \Ff_q^*,\atop\Tr_{q/r}(x+1)= 0}\varphi(x)=\sum_{x\in \Ff_q^*}\varphi(x)\delta_2(x)\\
&=&\sum_{x\in \Ff_q^*}\varphi(x)\frac{1}{r}\sum_{b\in \Ff_r}\psi_b(\Tr_{q/r}(x+1))\\
&=&\frac{1}{r}\sum_{b\in \Ff_r}\sum_{x\in \Ff_q^*}\varphi(x)\psi(\Tr_{q/r}(bx+b))\\
&=&\frac{1}{r}\sum_{b\in \Ff_r}\sum_{x\in \Ff_q^*}\varphi(x)\chi(bx+b)\\
&=&\frac{1}{r}\sum_{b\in \Ff_r}\sum_{x\in \Ff_q^*}\varphi(x)\chi(bx)\chi(b)\\
&=&\frac{1}{r}\sum_{b\in \Ff_r}G_q(\varphi,\chi_b)\chi(b)\\
&=&\frac{1}{r}\sum_{b\in \Ff_r^*}\overline{\varphi(b)}G_q(\varphi,\chi)\chi(b)\\
&=&\frac{1}{r}G_q(\varphi,\chi)G_r(\overline{\varphi}^*,\chi^*).\\
\end{eqnarray*}
Since $p\nmid s$, the result follows from Lemma \ref{gauss} and Lemma \ref{lem51}.
\end{proof}

 In Table \ref{table3}, we provide some explicit values of the parameters of the codebooks we proposed for some given $r$ and $q$, and corresponding numerical data of the Welch bound for comparison. The numerical results show  that the codebooks $\cc(D;E)$ asymptotically meet the Welch bound.

 \begin{table}[!htbp]
 \newcommand{\tabincell}[2]{\begin{tabular}{@{}#1@{}}#2\end{tabular}}
 \caption{\footnotesize Parameters of the $(N,K)$ codebook of Section III}
 \label{table3}
 \centering
 \setlength{\tabcolsep}{1mm}{
 \begin{tabular}{|c|c|c|c|c|c|c|}
   \hline
 $r$& $q$ & $N$ & $K$ & $I_{max}(\cc)$ & $I_W$ & $\frac{I_{max}(\cc)}{I_W}$ \\ \hline
 $3^2$&$3^4$& $80$& $36$& $0.1667$& $0.1244$& $1.3399$ \\ \hline
 $19$&$19^2$& $360$& $171$& $0.0683$& $0.0555$& $1.2310$ \\ \hline
 $179$&$179^2$& $32040$& $15931$& $0.006$& $0.0056$& $1.0748$ \\ \hline
 $3^5$&$3^{10}$& $59048$& $29403$& $0.0044$& $0.0041$& $1.0642$ \\ \hline
 $3^7$&$3^{14}$& $4782968$& $2390391$& $0.00046724$& $0.00045746$& $1.0214$ \\ \hline
 $5^3$&$5^6$& $244140625$& $121101500$& $6.5028e-05$& $6.541e-05$& $1.0080$ \\ \hline
 $7^3$&$7^{12}$& $1.19158e+20$& $9.5511e+19$& $7.2670e-11$& $7.2459e-11$& $1.0029$ \\ \hline
  $5^4$&$5^{16}$& $2.3283e+22$& $1.1623e+22$& $6.5746e-12$& $6.5641e-12$& $1.0016$ \\ \hline
  $19^5$&$19^{10}$& $6.1311e+12$& $3.0655e+12$& $4.0412e-07$& $4.0386e-07$& $1.0006$ \\ \hline
 \end{tabular}}
 \end{table}

\section{The second construction of codebooks}
In this section, by additive characters, transitivity of trace and fourier expansion, we construct new series of codebooks  which nearly meet the Welch bound.

In fact, we generalize the construction of codebooks in \cite{NM1} and show that  $I_{max}$ of the proposed codebooks asymptotically achieves the Welch bound. We also derive the whole distribution of inner products of the proposed codebooks. In  \cite{NM1} the Welch bound can be asymptotically achieved only for sufficiently large prime $p$, while in our construction the  Welch bound can be asymptotically  achieved for any odd prime $p$.

 We first recall the construction in \cite{NM1}.

Let $p$ be an odd prime and $q$ a power of $p$. Let $\eta$ be the quadratic multiplicative character of $\Ff_p$.   The construction in \cite{NM1} is equivalent to the following one.

Let
$$
D=\{x\in \Ff_{q^2}: \eta(\Tr_{q/p}x^T)=-1\},
$$
and $\#D=K$, where $T=q+1$.
Let $E=\widehat{\Ff_{q^2}}$.

The codebooks
$$C(D;E)=\{\cF_a=\frac{1}{\sqrt{K}}(\chi_a(x))_{x\in D}: \chi_a\in E\},$$
 are asymptotically optimal when $p\rightarrow \infty$.

In this paper we generalize the construction in \cite{NM1}.

Let $p$ be an odd prime, $r=p^t$  and $q=r^s$.  Let $\eta$ be the quadratic multiplicative character of $\Ff_{r}$. Let $\chi_a$ be an  additive character of $\Ff_{q^2}$, where $a\in \Ff_{q^2}$.  Let  $\chi=\chi_1$  be the canonical additive characters of $\Ff_{q^2}$.

Let
$$
D=\{x\in \Ff_{q^2}: \eta(\Tr_{q/r}x^T)=-1\},
$$
and $\#D=K$, where $T=q+1$.

Let $E=\widehat{\Ff_{q^2}}=\{\chi_a: a\in \Ff_{q^2}\}$.

We define a codeword of length $K$ as
$$
\cF_{a}=\frac{1}{\sqrt{K}}(\chi_a(x))_{x\in D},
$$
and construct the following $(N,K)$ codebooks as:
\begin{equation}\label{con2}
\cc(D;E)=\{\cF_{a}: a\in \Ff_{q^2}\}.
\end{equation}
By the definition of $D$ and $E$, it is easy to see that $N=q^2$ and $K=\frac{q(q+1)(r-1)}{2r}$.

%
%

The following theorem gives the whole distribution of inner products of  $\cc(D;E)$.

\begin{thm}\label{th52}
Let $C_{a_i}C_{a_j}^H$ be the inner product between a pair of code vectors $C_{a_i}$ and $C_{a_j}$ of the proposed codebooks $\cc(D;E)$,
where $a_i$ and $a_j$ are distinct elements in $\Ff_{q^2}$, $0\leq i\neq j\leq q^2-1$. Then $C_{a_i}C_{a_j}^H$ has the following distribution
$$
 C_{a_i}C_{a_j}^H=\left\{
            \begin{array}{ll}
              \frac{1}{K}\frac{r-1}{2r}q, & \hbox{$\frac{r+1}{2r}(q^4)-\frac{r-1}{2r}(q^3)-q^2$\ times,}\\
              \frac{1}{K}\frac{-r-1}{2r}q, &\hbox{$\frac{r-1}{2r}(q^4+q^3)$\ times.}
            \end{array}
          \right.
$$

\end{thm}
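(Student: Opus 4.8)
The plan is to compute the inner product $C_{a_i}C_{a_j}^H$ directly and show it takes exactly the two stated values, then count how often each occurs. First I would write, with $a := a_i - a_j \neq 0$,
\begin{align*}
K\,C_{a_i}C_{a_j}^H &= \sum_{x \in D}\chi_{a_i}(x)\overline{\chi_{a_j}(x)} = \sum_{x \in D}\chi_{a}(x) = \sum_{x \in \Ff_{q^2}}\chi_a(x)\,\delta(x),
\end{align*}
where $\delta(x)$ is the indicator of $D$, expressed via the quadratic character as $\delta(x) = \tfrac{1 - \eta(\Tr_{q/r}(x^T))}{2}$ when $\Tr_{q/r}(x^T)\neq 0$ and $\delta(x) = 0$ otherwise. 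Splitting off the $\Tr_{q/r}(x^T)=0$ locus exactly as in the proof of Theorem \ref{th51}, I get
\begin{align*}
K\,C_{a_i}C_{a_j}^H = -\tfrac12\sum_{x \in \Ff_{q^2}}\chi_a(x)\,\eta(\Tr_{q/r}(x^T)) - \tfrac12\sum_{x \in \Ff_{q^2},\ \Tr_{q/r}(x^T)=0}\chi_a(x),
\end{align*}
using that $\sum_{x}\chi_a(x) = 0$ since $a \neq 0$. Call these two sums $A$ and $B$; I expect the bulk of the argument to be evaluating them.

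For $A$, I would apply the Fourier expansion of the quadratic character $\eta$ on $\Ff_r$ (Lemma \ref{gauss2}), turning $\eta(\Tr_{q/r}(x^T))$ into $\tfrac1r\sum_{b\in\Ff_r}G_r(\eta,\overline{\psi_b})\psi_b(\Tr_{q/r}(x^T))$; by transitivity of trace, $\psi_b(\Tr_{q/r}(x^T)) = \chi_b(x^T)$ is again an additive character of $\Ff_{q^2}$ evaluated at $x^T = x^{q+1} = N(x)$, the norm from $\Ff_{q^2}$ to $\Ff_q$. The inner sum over $x$ then becomes a character sum of the shape $\sum_{x\in\Ff_{q^2}}\chi_b(N(x))\chi_a(x)$, which after the usual manipulation (writing the norm fiber sizes, or diagonalizing via Gauss sums over $\Ff_{q^2}$ and $\Ff_q$) reduces to Gauss sums whose absolute values are controlled by Lemma \ref{gauss}. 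For $B$, the same approach with the trivial-on-the-kernel indicator $\delta_2$-style expansion gives $B$ as $\tfrac1r$ times a product of a Gauss sum over $\Ff_{q^2}$ and a Gauss sum over $\Ff_q$ involving the restriction $\overline{\chi}^*$ and the relevant norm-twisted character — again with absolute value a power of $q$ depending only on whether certain restricted characters are trivial. The key point is that, unlike Theorem \ref{th51} where only $|A|$ and $|B|$ were needed, here I need the \emph{exact} values (including signs/phases) of $A$ and $B$ on the nose, combined across all $a$, to read off that $K\,C_{a_i}C_{a_j}^H \in \{\tfrac{r-1}{2r}q,\ \tfrac{-r-1}{2r}q\}$.

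Once $A$ and $B$ are pinned down as explicit functions of $a$ (more precisely, of the coset behavior of $N(a)$ or of $\Tr_{q/r}(\cdot)$ evaluated on a line through $a$), the final step is counting. The number of times each value is attained is a counting of pairs $(a_i,a_j)$, equivalently of nonzero $a\in\Ff_{q^2}$ with multiplicity $q^2$ (since each difference $a$ arises from $q^2$ ordered pairs) — wait, more carefully, for a fixed nonzero $a$ there are exactly $q^2$ ordered pairs $(a_i,a_j)$ with $a_i - a_j = a$ when $a_i,a_j$ range over all of $\Ff_{q^2}$, but here $a_i \neq a_j$ is automatic. So I would determine, for how many $a \in \Ff_{q^2}^*$ the inner product equals the first value versus the second, using the distribution of $\eta(\Tr_{q/r}(a^T))$ or the analogous quantity over $\Ff_{q^2}^*$ together with the $a=0$ caveat, then multiply by $q^2$ and simplify the arithmetic to match $\tfrac{r+1}{2r}q^4 - \tfrac{r-1}{2r}q^3 - q^2$ and $\tfrac{r-1}{2r}(q^4+q^3)$. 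As a sanity check I would verify the two frequencies sum to $q^2(q^2-1)$, the total number of ordered pairs of distinct elements of $\Ff_{q^2}$. The main obstacle I anticipate is the exact (not just absolute-value) evaluation of the character sum $A$: controlling the Gauss-sum phases when the norm map is composed with additive characters, and showing the phase contributions collapse so that $A$ lands in a real two-valued set with the stated multiplicities; this is where transitivity of trace and a careful bookkeeping of which restricted characters are trivial will do the heavy lifting.
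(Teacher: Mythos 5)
Your decomposition of $K\,C_{a_i}C_{a_j}^H$ into $-\tfrac12 Q-\tfrac12 P$ (your $A$ and $B$), and your counting plan at the end — $q^2$ ordered pairs per nonzero difference $a$, the $(q+1)$-to-$1$ norm map $a\mapsto a^T$, the fiber sizes of $\Tr_{q/r}$, and the check that the frequencies sum to $q^2(q^2-1)$ — all match the paper and are correct. But the proposal has a genuine gap exactly where you flag "the main obstacle I anticipate": you never evaluate the two character sums, and the step you sketch for doing so (reducing to Gauss sums "whose absolute values are controlled by Lemma \ref{gauss}") cannot work, since the distribution requires the exact complex values of $P$ and $Q$ as functions of $a$, not their moduli. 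The missing idea is the exact Walsh-transform computation of Lemma \ref{54}: for $b\in\Ff_r^*$ one completes the square with respect to the norm form, writing $\Tr_{q^2/p}(ax)=\Tr_{q/p}(ax+a^qx^q)$ and
$$bx^{q+1}+ax+a^qx^q=b\Bigl(x+\tfrac{a^q}{b}\Bigr)^{q+1}-\tfrac{a^{q+1}}{b},$$
then uses that $x\mapsto x^{q+1}$ maps $\Ff_{q^2}^*$ onto $\Ff_q^*$ with all fibers of size $q+1$ to get
$$\sum_{x\in \Ff_{q^2}}\omega^{\Tr_{q/p}(bx^T)+\Tr_{q^2/p}(ax)}=-q\,\omega^{-\Tr_{q/p}(a^T/b)},$$
i.e.\ $f_b(x)=\Tr_{q/p}(bx^T)$ is regular bent. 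Feeding this into the Fourier expansions gives $P=-\tfrac{q}{r}\sum_{c\in\Ff_r^*}\psi(c\Tr_{q/r}a^T)$ and $Q=-\tfrac{q}{r}G_r(\eta,\psi)G_r(\eta,\psi_{\Tr_{q/r}a^T})$, which evaluate exactly (using $G_r(\eta,\psi)^2=\eta(-1)r$) to produce the two values $\tfrac{r-1}{2r}q$ and $\tfrac{-r-1}{2r}q$ according to whether $\Tr_{q/r}a^T=0$ or $\eta(-\Tr_{q/r}a^T)=\pm1$. Without this identity your argument stalls at the point you yourself identify.

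A secondary issue: your description of $B$ as "$\tfrac1r$ times a product of a Gauss sum over $\Ff_{q^2}$ and a Gauss sum over $\Ff_q$ involving the restriction $\overline{\chi}^*$" transplants the structure of Lemma \ref{lem53} from the multiplicative-character construction of Section 3; here the codewords are additive characters, and $P$ is instead $\tfrac1r\sum_{b\in\Ff_r}\widehat{f_b}(a)$, again resting on the same bent-function evaluation. Once that lemma is in place, the rest of your outline (case analysis on $\Tr_{q/r}a^T$ and the multiplicity count $\tfrac{r-1}{2r}(q^4+q^3)=q^2\cdot(q+1)\cdot\tfrac{q}{r}\cdot\tfrac{r-1}{2}$ for the second value) goes through as you describe.
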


\begin{proof}

We set
$$
 \delta_3(x)=\left\{
            \begin{array}{ll}
              \frac{1-\eta(\Tr_{q/r}x^T)}{2},& \hbox{if\ $\Tr_{q/r}x^T\neq$ 0,}\\
              0,& \hbox{if\ $\Tr_{q/r}x^T=$ 0.}
            \end{array}
          \right.
$$
By the definition of $D$, we known that

$$
 \delta_3(x)=\left\{
            \begin{array}{ll}
              1, & \hbox{if\ $x\in D$,} \\
              0,& \hbox{otherwise.}
            \end{array}
          \right.
$$

As the notations defined in this section, we have
\begin{eqnarray*}
K(C_{a_i}C_{a_j}^H)&=&\sum_{x\in D}\chi_{a_i}(x)\overline{\chi_{a_j}}(x)\\
&=&\sum_{x\in D}\chi((a_i-a_j)x)\\
&=&\sum_{x\in D}\chi(ax), \ 0\neq a\in \Ff_{q^2}\\
&=&\sum_{x\in \Ff_{q^2}}\chi(ax)\delta_3(x)\\
&=&\sum_{x\in \Ff_{q^2} \atop \Tr_{q/r}x^T\neq 0}\chi(ax)\frac{1-\eta(\Tr_{q/r}x^T)}{2}\\
&=&\sum_{x\in \Ff_{q^2}}\chi(ax)\frac{1-\eta(\Tr_{q/r}x^T)}{2}\\
&&-\sum_{x\in \Ff_{q^2}\atop\Tr_{q/r}x^T= 0}\chi(ax)\frac{1-\eta(\Tr_{q/r}x^T)}{2}\\
&=&-\frac{1}{2}\sum_{x\in \Ff_{q^2}}\chi(ax)\eta(\Tr_{q/r}x^T)\\
&&-\frac{1}{2}\sum_{x\in \Ff_{q^2}\atop\Tr_{q/r}x^T= 0}\chi(ax),\\
&=&-\frac{1}{2}Q-\frac{1}{2}P,
\end{eqnarray*}
where $Q$ and $P$ are defined in Lemma \ref{lem55} and Lemma \ref{lem56}.

By the results in Lemma \ref{lem55} and Lemma \ref{lem56}, we have

$$
 K(C_{a_i}C_{a_j}^H)=\left\{
            \begin{array}{ll}
              \frac{r-1}{2r}q, & \hbox{if\ $\Tr_{q/r}a^T=0$,} \\
              \frac{r-1}{2r}q, & \hbox{if\ $\Tr_{q/r}a^T=1$,}\\
              \frac{-r-1}{2r}q, & \hbox{if\ $\Tr_{q/r}a^T=-1$.}
            \end{array}
          \right.
$$

To find the distribution, we need the following illustrations.
$(1)$ For any $a\in \Ff_{q^2}^*$, there are $N$ groups $(a_i,a_j)$ satisfying $a_i-a_j=a$. \\
$(2)$ For any $z\in \Ff_q^*$, there are $T$ different $a\in \Ff_{q^2}^*$ which satisfy $z=a^T$.\\
$(3)$ There are $\frac{q}{r}-1$ different $z\in \Ff_q^*$  with $\Tr_{q/r}z=0$, and there are $\frac{q}{r}$ different $z\in \Ff_q^*$  with $\Tr_{q/r}z\neq 0$.\\
$(4)$ The number of squares and nonsquares in $\Ff_r^*$ is $\frac{r-1}{2}$.

Then it is easy to get the distribution of $C_{a_i}C_{a_j}^H$ as follows:
$$
 C_{a_i}C_{a_j}^H=\left\{
            \begin{array}{ll}
              \frac{1}{K}\frac{r-1}{2r}q, & \hbox{$\frac{r+1}{2r}(q^4)-\frac{r-1}{2r}(q^3)-q^2$\ times,}\\
              \frac{1}{K}\frac{r+1}{2r}q, &\hbox{$\frac{r-1}{2r}(q^4+q^3)$\ times.}
            \end{array}
          \right.
$$
This completes the proof.
\end{proof}

The following corollary  which is a direct consequence of Theorem \ref{th52} gives the upper bound of $I_{max}{\cc(D;E)}$.
\begin{cor}\label{cor51}
The maximum magnitude $I_{max}{\cc(D;E)}$ of inner products between a pair of code vectors of the proposed codebooks is upper bounded by
$$I_{max}{\cc(D;E)}=\underset{0\leq i\neq j \leq q^2-1}{\max}|\mathbf{c}_i\mathbf{c}_j^H|\leq\frac{1}{K}\frac{r+1}{2r}q.$$
\end{cor}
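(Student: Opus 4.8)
The plan is to obtain Corollary \ref{cor51} as an immediate consequence of the inner-product distribution already established in Theorem \ref{th52}. First I would recall from that theorem that for any two distinct indices $i,j$ (equivalently, for the nonzero element $a=a_i-a_j\in\Ff_{q^2}$) the inner product $\mathbf{c}_i\mathbf{c}_j^H=\tfrac{1}{K}\bigl(C_{a_i}C_{a_j}^H\bigr)$ takes exactly one of the two values $\tfrac{1}{K}\tfrac{r-1}{2r}q$ and $-\tfrac{1}{K}\tfrac{r+1}{2r}q$, according to whether $\Tr_{q/r}a^{T}$ is $0$ or a nonzero square of $\Ff_r$ (first value) or a nonsquare of $\Ff_r$ (second value). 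Consequently the set of moduli $\{\,|\mathbf{c}_i\mathbf{c}_j^H|:0\le i\neq j\le q^2-1\,\}$ is contained in $\bigl\{\tfrac{1}{K}\tfrac{r-1}{2r}q,\ \tfrac{1}{K}\tfrac{r+1}{2r}q\bigr\}$.

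The second step is just the comparison of these two nonnegative numbers. Since $r=p^{t}>0$, we have $\tfrac{r-1}{2r}<\tfrac{r+1}{2r}$, so the larger modulus is $\tfrac{1}{K}\tfrac{r+1}{2r}q$; taking the maximum over all pairs $i\neq j$ then yields $I_{max}(\cc(D;E))\le \tfrac{1}{K}\tfrac{r+1}{2r}q$, which is the claimed bound. I would also point out (though it is not needed for the corollary) that the multiplicity $\tfrac{r-1}{2r}(q^{4}+q^{3})$ of the second value is strictly positive, so this value is actually attained and the inequality in the corollary is in fact an equality.

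There is essentially no obstacle here: all of the genuine work — the evaluation of $Q$ and $P$ via the Fourier expansion and Gauss sums (Lemma \ref{lem55} and Lemma \ref{lem56}) and the counting argument behind the multiplicities — has already been carried out in the proof of Theorem \ref{th52}. The only points requiring a little care are (i) that the comparison of the two tabulated values must be made in absolute value, the negative value $-\tfrac{1}{K}\tfrac{r+1}{2r}q$ being the one of larger modulus, and (ii) the harmless sign discrepancy between the statement of Theorem \ref{th52} (which lists $\tfrac{1}{K}\tfrac{-r-1}{2r}q$) and the last display of its proof (which lists $\tfrac{1}{K}\tfrac{r+1}{2r}q$); since only the modulus enters $I_{max}$, this does not affect anything. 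An alternative, self-contained route would be to bound $\bigl|-\tfrac12 Q-\tfrac12 P\bigr|$ directly by the triangle inequality using the absolute values of $Q$ and $P$ supplied by those two lemmas, but given that Theorem \ref{th52} is already in hand, reading the bound off its distribution is the cleanest path.
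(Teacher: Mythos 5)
Your proposal is correct and matches the paper exactly: the paper gives no separate proof for Corollary \ref{cor51}, presenting it as a direct consequence of Theorem \ref{th52}, and reading the larger of the two moduli $\frac{1}{K}\frac{r-1}{2r}q$ and $\frac{1}{K}\frac{r+1}{2r}q$ off the inner-product distribution is precisely that argument. Your side remarks — that the bound is actually attained since the corresponding multiplicity is positive, and that the sign discrepancy between the theorem's statement and the end of its proof is immaterial for $I_{max}$ — are accurate and do not change the approach.
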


\begin{thm}\label{th53}
Let $I_W$ be the magnitude of the optimal correlation, i.e. the Welch bound equality, for the given $N$, $K$. We have

$$\lim_{r\rightarrow\infty}\frac{I_{max}(\cc(D;E))}{I_W}=1,$$
then the codebooks we proposed are near optimal.

\end{thm}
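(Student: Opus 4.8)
The plan is to sandwich the ratio $I_{max}(\cc(D;E))/I_W$ between $1$ and a quantity tending to $1$. The lower bound $I_{max}(\cc(D;E))/I_W \ge 1$ is just the Welch bound, so the work is entirely on the upper side, where Corollary~\ref{cor51} supplies $I_{max}(\cc(D;E)) \le \frac{(r+1)q}{2rK}$ (in fact Theorem~\ref{th52} shows this is an equality once the smaller value in the distribution occurs, but the upper bound is all we need).

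First I would substitute $N = q^2$ and $K = \frac{q(q+1)(r-1)}{2r}$ into $I_W = \sqrt{(N-K)/((N-1)K)}$. A direct computation gives $N - K = \frac{q\,(qr+q-r+1)}{2r}$ and $N-1 = (q-1)(q+1)$, hence
$$
I_W^2 = \frac{qr+q-r+1}{(q^2-1)(q+1)(r-1)}.
$$
On the other hand, cancelling $K$ in the bound of Corollary~\ref{cor51} yields
$$
I_{max}(\cc(D;E))^2 \le \frac{(r+1)^2}{(q+1)^2(r-1)^2}.
$$
Dividing the two and using $q^2-1 = (q-1)(q+1)$ to simplify, I obtain
$$
1 \le \frac{I_{max}(\cc(D;E))^2}{I_W^2} \le \frac{(r+1)^2(q-1)}{(r-1)(qr+q-r+1)}.
$$

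It then remains to check that the right-hand side tends to $1$ as $r\to\infty$. Since $q = r^s$ with $s\ge 1$, we have $q\to\infty$ together with $r$; writing the quotient as $1 + \frac{2(qr-2r+q)}{qr^2-r^2+2r-q-1}$ (or simply dividing numerator and denominator of the bound by $qr^2$), one sees that both numerator and denominator are $\Theta(qr^2)$ with matching leading coefficient $1$, so every correction term is $O(1/r)$ and the quotient converges to $1$. By the squeeze theorem $\lim_{r\to\infty} I_{max}(\cc(D;E))/I_W = 1$, which shows the codebooks are asymptotically optimal. There is no genuine obstacle here; the only thing to watch is the algebraic bookkeeping in identifying the dominant $\Theta(qr^2)$ terms in numerator and denominator, and recalling that $r\to\infty$ forces $q\to\infty$ so that the terms $1/q$, $1/(qr)$, etc., also vanish. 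The argument parallels the one used for the first construction in Section~III.
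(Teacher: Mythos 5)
Your proposal is correct and follows essentially the same route as the paper: bound $I_{max}(\cc(D;E))$ above via Corollary~\ref{cor51}, substitute $N=q^2$ and $K=\frac{q(q+1)(r-1)}{2r}$ into $I_W$, and reduce the squared ratio to $\frac{(r+1)^2(q-1)}{(r-1)(qr+q-r+1)}$, which is exactly the paper's expression since $2rq-(q+1)(r-1)=qr+q-r+1$. The only cosmetic difference is that the paper closes with the explicit bound $\sqrt{\tfrac{r+1}{r-1}}$ via the inequality $q(r+1)-(r-1)>(q-1)(r+1)$, whereas you argue by matching the leading $\Theta(qr^2)$ terms; both are fine.
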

\begin{proof}
Note that $N=q^2$ and $K=\frac{q(q+1)(r-1)}{2r}$. Then the corresponding Welch bound is
$$
I_W=\sqrt{\frac{N-K}{(N-1)K}}=\sqrt{\frac{2rq-(q+1)(r-1)}{(q+1)^2(q-1)(r-1)}}.
$$
It can easily be seen that
\begin{eqnarray*}
1\leq\frac{I_{max}(\cc(D))}{I_W}&\leq& \frac{1}{K}\frac{r+1}{2r}q\sqrt{\frac{2rq-(q+1)(r-1)}{(q+1)^2(q-1)(r-1)}}\\
&=&\sqrt{\frac{(r+1)^2(q-1)}{(r-1)[2rq-(q+1)(r-1)]}}\\
&=&\sqrt{\frac{(r+1)^2(q-1)}{(r-1)[q(r+1)-(r-1)]}}\\
&<&\sqrt{\frac{(r+1)^2(q-1)}{(r-1)[q(r+1)-(r+1)]}}\\
&=&\sqrt{\frac{r+1}{r-1}}.
\end{eqnarray*}
It follows that $\lim_{r\rightarrow+\infty }\frac{I_{max}(\cc(D;E))}{I_W}=1$. The codebooks $\cc(D;E)$ asymptotically meet the Welch bound. This completes the proof.
\end{proof}

\begin{remark}
When $r=p$, the construction we proposed is exactly the one in \cite{NM1}. In \cite{NM1} the codebooks are construced by selecting $K$ rows from $N\times N$ Hadamard matrices based on binary row selection sequence, but we use the additive character of finite field and a set $D$ to construct our codebooks. It is pointed out in \cite{NM1} that $I_{max}$ of the codebooks asymptotically achieves the Welch bound equality only for sufficiently large prime number $p$. For small $p$, $I_{max}$ of the codebooks has larger value than the Welch bound equality, though $q\rightarrow+\infty$. It is hard to find sufficiently large primes. While, in our construction, for any fixed odd prime $p$, we can let $t$ increase and then $r=p^t\rightarrow+\infty$. So $I_{max}(\cc(D))$  can  asymptotically achieve the Welch bound equality.
\end{remark}

The Lemmas  which are used in the proof of Theorem \ref{th52} are as follows.
\begin{lem}\label{54}
Let $f_b(x)=\Tr_{q/p}(bx^T)$, $\widehat{f_b}(a)=\sum_{x\in \Ff_{q^2}}\omega^{\Tr_{q/p}(bx^T)+\Tr_{q^2/p}(ax)}$, where $b\in \Ff_r$ and $\omega$ is the complex $p$-th root of unity. Then we have
$$\widehat{f_b}(a)=
          \left\{
            \begin{array}{ll}
              -q\omega^{-\Tr_{q/p}{\frac{a^T}{b}}},& \hbox{if\ $b\in \Ff_{r}^*,$}\\
              0,& \hbox{if\ $b=0$.}
            \end{array}
          \right.$$


where $a\in \Ff_{q^2}$.
\end{lem}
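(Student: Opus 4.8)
The plan is to evaluate the Weil-type exponential sum $\widehat{f_b}(a)=\sum_{x\in\Ff_{q^2}}\omega^{\Tr_{q/p}(bx^T)+\Tr_{q^2/p}(ax)}$ by first reducing the inner trace expressions to the subfield $\Ff_q$ using transitivity of trace, which is the recurring device of the paper. Writing $T=q+1$, note that for $x\in\Ff_{q^2}$ the norm-type element $x^{q+1}=x\cdot x^q$ already lies in $\Ff_q$, so $\Tr_{q/p}(bx^T)$ is genuinely a trace from $\Ff_q$. For the linear term, since $\Tr_{q^2/p}(ax)=\Tr_{q/p}\big(\Tr_{q^2/q}(ax)\big)=\Tr_{q/p}\big(ax+(ax)^q\big)=\Tr_{q/p}\big((a+a^q x^{q-1}\cdot\ldots)\big)$ — more cleanly, I would substitute $x$ by its representation and collect everything under $\Tr_{q/p}$, so that $\widehat{f_b}(a)=\sum_{x\in\Ff_{q^2}}\chi'\big(bx^{q+1}+\Tr_{q^2/q}(ax)\big)$ where $\chi'$ is the canonical additive character of $\Ff_q$.

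The key step is then to recognize the exponent as a quadratic form (in disguise) on $\Ff_{q^2}$ viewed as a $2$-dimensional $\Ff_q$-space. The map $x\mapsto bx^{q+1}$ is, up to the scalar $b\in\Ff_r^*\subseteq\Ff_q^*$, the norm form $N_{\Ff_{q^2}/\Ff_q}$, which is a nondegenerate quadratic form over $\Ff_q$ in two variables; the linear term $\Tr_{q^2/q}(ax)$ is an $\Ff_q$-linear functional. So $\widehat{f_b}(a)$ is a Gauss-sum-type evaluation of a nondegenerate binary quadratic form twisted by a character. I would complete the square: since $b\neq 0$, write $bx^{q+1}+\Tr_{q^2/q}(ax)=b\,N(x - c) - b\,N(c)$ for an appropriate $c\in\Ff_{q^2}$ determined by $a$ and $b$ (this uses that the bilinear form associated to the norm is nondegenerate, so the linear functional $\Tr_{q^2/q}(a\cdot)$ is represented by some $c$); concretely one finds $b\,N(c)=\Tr_{q/p}\big(a^{q+1}/b\big)$ up to constants, which is exactly the exponent $-\Tr_{q/p}(a^T/b)$ appearing in the statement. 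After the shift $x\mapsto x+c$, the sum becomes $\chi'(-bN(c))\sum_{x\in\Ff_{q^2}}\chi'(bN(x))$, and the remaining sum $\sum_{x}\chi'(bN(x))$ is a standard quadratic Gauss sum over $\Ff_{q^2}$: since $N$ takes each value in $\Ff_q^*$ exactly $q+1$ times and $0$ once, $\sum_{x\in\Ff_{q^2}}\chi'(bN(x)) = 1 + (q+1)\sum_{y\in\Ff_q^*}\chi'(by) = 1+(q+1)(-1) = -q$. This yields $\widehat{f_b}(a)=-q\,\omega^{-\Tr_{q/p}(a^T/b)}$ for $b\in\Ff_r^*$.

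For the case $b=0$: then $\widehat{f_0}(a)=\sum_{x\in\Ff_{q^2}}\omega^{\Tr_{q^2/p}(ax)}=\sum_{x\in\Ff_{q^2}}\chi_a(x)$, which is $q^2$ if $a=0$ and $0$ otherwise by the orthogonality of additive characters — the statement records only the value $0$, presumably because $a=0$ contributes a term that is handled separately in the proof of Theorem \ref{th52}, or the convention there is $a\neq 0$; in any case this is immediate. The main obstacle I anticipate is purely bookkeeping: carefully carrying all the nested traces ($\Tr_{q^2/p}$, $\Tr_{q/p}$, $\Tr_{q^2/q}$) through the completion of the square and verifying that the constant that pops out is precisely $-\Tr_{q/p}(a^T/b)$ rather than some scalar multiple of it — this requires using $T=q+1$ and the identity $a^{q+1}=N_{\Ff_{q^2}/\Ff_q}(a)$ at the right moment, and checking that the shift element $c$ satisfies $bc^{q+1}=a^{q+1}/b$, i.e. essentially $c=a/b$ up to a correction. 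No deep input beyond orthogonality of characters, transitivity of trace, and the value distribution of the norm map is needed; there is no appeal to Weil bounds because the form is only binary and the sum is exactly computable.
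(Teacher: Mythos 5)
Your proposal is correct and follows essentially the same route as the paper: write $\Tr_{q^2/p}(ax)=\Tr_{q/p}(ax+a^qx^q)$, complete the square in the norm form to get $b(x+\tfrac{a^q}{b})^{q+1}-\tfrac{a^{q+1}}{b}$, and evaluate the remaining sum via the fact that $y\mapsto y^{q+1}$ hits each element of $\Ff_q^*$ exactly $q+1$ times, giving $1+(q+1)(-1)=-q$. Your observation that the $b=0$ case gives $q^2$ when $a=0$ (so the lemma implicitly assumes $a\neq 0$, as it is in fact only applied with $a\neq 0$) is accurate, and your only hedge --- that the shift element is $a^q/b$ rather than $a/b$ --- is exactly the bookkeeping the paper carries out explicitly.
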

\begin{proof}
When $b\in \Ff_{r}^*$, it holds that
\begin{eqnarray*}
\widehat{f_b}(a)&=&\sum_{x\in \Ff_{q^2}}\omega^{\Tr_{q/p}(bx^T+ax+a^qx^q)}\\
&=&\sum_{x\in \Ff_{q^2}}\omega^{\Tr_{q/p}(b[x^q(x+\frac{a^q}{b})+\frac{a}{b}(x+\frac{a^q}{b})-\frac{a^{q+1}}{b^2}])}\\
&=&\sum_{x\in \Ff_{q^2}}\omega^{\Tr_{q/p}(b[(x^q+\frac{a}{b})(x+\frac{a^q}{b})-\frac{a^{q+1}}{b^2}])}\\
&=&\sum_{x\in \Ff_{q^2}}\omega^{\Tr_{q/p}(b[(x^q+(\frac{a^q}{b})^q)(x+\frac{a^q}{b})-\frac{a^{q+1}}{b^2}])}\\
&=&\sum_{x\in \Ff_{q^2}}\omega^{\Tr_{q/p}(b(x+\frac{a^q}{b})^{q+1}-\frac{a^{q+1}}{b})}\\
&=&\sum_{y\in \Ff_{q^2}}\omega^{\Tr_{q/p}(by^T-\frac{a^T}{b})}\\
&=&\omega^{-\Tr_{q/p}\frac{a^T}{b}}\sum_{y\in \Ff_{q^2}}\omega^{\Tr_{q/p}(by^T)}\\
&=&\omega^{-\Tr_{q/p}\frac{a^T}{b}}\sum_{z\in \Ff_{q}}\sum_{y\in \Ff_{q^2} \atop y^T=z}\omega^{\Tr_{q/p}(bz)}\\
&=&\omega^{-\Tr_{q/p}\frac{a^T}{b}}\sum_{z\in \Ff_{q}}\omega^{\Tr_{q/p}(bz)}\sum_{y\in \Ff_{q^2} \atop y^T=z}1\\
&=&\omega^{-\Tr_{q/p}\frac{a^T}{b}}[1+(q+1)\sum_{z\in \Ff_{q^*}}\omega^{\Tr_{q/p}(bz)}]\\
&=&-q\omega^{-\Tr_{q/p}\frac{a^T}{b}}
\end{eqnarray*}
When $b=0$, it is easy to see that $\widehat{f_b}(a)=0$.
\end{proof}

\begin{remark}
By the result in Lemma \ref{54}, $f_b(x)$ is a regular bent function, see \cite{H}.
\end{remark}

Let $\psi_b$ be an  additive character of $\Ff_{r}$, where $b\in \Ff_r$. Let  $\chi=\chi_1$ and $\psi=\psi_1$ be the canonical additive characters of $\Ff_{q^2}$ and $\Ff_{r}$ respectively.
We set $$
 \delta_4(x)=\frac{1}{r}\sum_{b\in \Ff_r}\psi_b(\Tr_{q/r}x^T),
$$

then
$$\delta_4(x)=
          \left\{
            \begin{array}{ll}
              1,& \hbox{if\ $\Tr_{q/r}x^T=$ 0,}\\
              0,& \hbox{if\ $\Tr_{q/r}x^T\neq$ 0.}
            \end{array}
          \right.$$

\begin{lem}\label{lem55}
Let $P=\sum_{x\in \Ff_{q^2} \atop \Tr_{q/r}x^T=0}\chi(ax)$, where $a\neq 0$, then we have
$$P=
          \left\{
            \begin{array}{ll}
              -\frac{q}{r}(r-1),& \hbox{if\ $\Tr_{q/r}a^T=$ 0,}\\
              \frac{q}{r},& \hbox{if\ $\Tr_{q/r}a^T\neq$ 0.}
            \end{array}
          \right.
$$
\end{lem}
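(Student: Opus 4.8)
The plan is to evaluate $P=\sum_{x\in \Ff_{q^2},\ \Tr_{q/r}x^T=0}\chi(ax)$ by introducing the indicator $\delta_4$ to remove the constraint, then push everything down to $\Ff_r$ via transitivity of trace and finally invoke the bent-function Fourier computation of Lemma \ref{54}. First I would write
$$
P=\sum_{x\in \Ff_{q^2}}\chi(ax)\delta_4(x)=\frac{1}{r}\sum_{b\in \Ff_r}\sum_{x\in \Ff_{q^2}}\chi(ax)\psi_b(\Tr_{q/r}x^T).
$$
Next I would rewrite $\psi_b(\Tr_{q/r}x^T)$ in terms of the canonical character of $\Ff_p$: using $\psi_b(y)=\zeta_p^{\Tr_{r/p}(by)}$ and transitivity $\Tr_{r/p}\circ\Tr_{q/r}=\Tr_{q/p}$, one gets $\psi_b(\Tr_{q/r}x^T)=\omega^{\Tr_{q/p}(bx^T)}$ with $\omega=\zeta_p$, and similarly $\chi(ax)=\omega^{\Tr_{q^2/p}(ax)}$. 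Hence the inner sum over $x$ is exactly $\widehat{f_b}(a)$ from Lemma \ref{54}, and
$$
P=\frac{1}{r}\sum_{b\in \Ff_r}\widehat{f_b}(a)=\frac{1}{r}\sum_{b\in \Ff_r^*}\bigl(-q\,\omega^{-\Tr_{q/p}(a^T/b)}\bigr),
$$
the $b=0$ term vanishing.

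The remaining task is the character sum $S:=\sum_{b\in \Ff_r^*}\omega^{-\Tr_{q/p}(a^T/b)}$. Here I would again use transitivity: $\Tr_{q/p}(a^T/b)=\Tr_{r/p}\bigl(\Tr_{q/r}(a^T/b)\bigr)=\Tr_{r/p}\bigl(b^{-1}\Tr_{q/r}(a^T)\bigr)$ since $b\in \Ff_r$ is a scalar for the inner trace. Writing $c=\Tr_{q/r}(a^T)\in \Ff_r$, we have $S=\sum_{b\in \Ff_r^*}\omega^{-\Tr_{r/p}(c/b)}=\sum_{u\in \Ff_r^*}\zeta_p^{\Tr_{r/p}(cu)}$ after the substitution $u=-1/b$ (a bijection of $\Ff_r^*$). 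This last sum is a standard complete additive character sum over $\Ff_r^*$: it equals $r-1$ if $c=0$ and $-1$ if $c\neq 0$, by the orthogonality relation for additive characters recalled in Section~2. Since $c=\Tr_{q/r}a^T$, the two cases are precisely $\Tr_{q/r}a^T=0$ and $\Tr_{q/r}a^T\neq 0$. Substituting back, $P=-\frac{q}{r}(r-1)$ in the first case and $P=-\frac{q}{r}(-1)=\frac{q}{r}$ in the second, which is the claimed formula.

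I expect the main obstacle to be purely bookkeeping: making sure the several applications of transitivity of trace and the base-change of additive characters ($\chi$ of $\Ff_{q^2}$ down to $\omega$ of $\Ff_p$, $\psi_b$ of $\Ff_r$ down to $\omega$ of $\Ff_p$) are matched up correctly so that the inner $x$-sum is \emph{exactly} $\widehat{f_b}(a)$ with no stray constant, and that the substitution $u=-1/b$ correctly absorbs the sign in the exponent $-\Tr_{q/p}(a^T/b)$. No deep estimate is needed — Lemma \ref{54} does the heavy lifting for the $x$-sum and orthogonality of additive characters of $\Ff_r$ finishes the $b$-sum — so once the normalizations are pinned down the computation is routine.
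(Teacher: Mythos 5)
Your proposal is correct and follows essentially the same route as the paper's proof: insert the indicator $\delta_4$, recognize the inner $x$-sum as $\widehat{f_b}(a)$ from Lemma \ref{54}, and reduce the remaining $b$-sum via the substitution $c=-1/b$ and transitivity of trace to the orthogonality relation $\sum_{c\in\Ff_r^*}\psi(c\Tr_{q/r}a^T)\in\{r-1,-1\}$. The paper's displayed computation matches your steps line for line, including the vanishing of the $b=0$ term.
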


\begin{proof}
It follows that
\begin{eqnarray*}
P&=&\sum_{x\in \Ff_{q^2} \atop \Tr_{q/r}x^T=0}\chi(ax)\\
&=&\sum_{x\in \Ff_{q^2}}\chi(ax)\delta_4(x)\\
&=&\sum_{x\in \Ff_{q^2}}\chi(ax)\frac{1}{r}\sum_{b\in \Ff_r}\psi_b(\Tr_{q/r}x^T)\\
&=&\frac{1}{r}\sum_{x\in \Ff_{q^2}}\sum_{b\in \Ff_r}\omega^{\Tr_{r/p}(b\Tr_{q/r}x^T)}\omega^{\Tr_{q^2/p}(ax)}\\
&=&\frac{1}{r}\sum_{b\in \Ff_r}\sum_{x\in \Ff_{q^2}}\omega^{\Tr_{q/p}(bx^T)+\Tr_{q^2/p}(ax)}\\
&=&\frac{1}{r}\sum_{b\in \Ff_r}\widehat{f_b}(a)\\
&=&\frac{1}{r}\sum_{b\in \Ff_r^*}-q\omega^{-\Tr_{q/p}{\frac{a^T}{b}}}\\
&=&-\frac{q}{r}\sum_{c\in \Ff_r^*}\omega^{\Tr_{q/p}{ca^T}}\\
&=&-\frac{q}{r}\sum_{c\in \Ff_r^*}\omega^{\Tr_{r/p}c\Tr_{q/r}{a^T}}\\
&=&-\frac{q}{r}\sum_{c\in \Ff_r^*}\psi(c\Tr_{q/r}{a^T})
\end{eqnarray*}

The result follows from the fact that
$$\sum_{c\in \Ff_{r}}\psi(c\Tr_{q/r}{x^T})=
          \left\{
            \begin{array}{ll}
              r,& \hbox{if\ $\Tr_{q/r}x^T=$ 0,}\\
              0,& \hbox{if\ $\Tr_{q/r}x^T\neq$ 0.}
            \end{array}
          \right.
$$

\end{proof}

\begin{lem}\label{lem56}
Let $Q=\sum_{x\in \Ff_{q^2}}\psi(ax)\eta(\Tr_{q/r}x^T)$, where $a\neq 0$. Then we have
$$Q=
          \left\{
            \begin{array}{ll}
              0,& \hbox{if\ $\Tr_{q/r}a^T=$ 0,}\\
              -q\eta(-\Tr_{q/r}a^T),& \hbox{if\ $\Tr_{q/r}a^T\neq$ 0.}
            \end{array}
          \right.
$$
\end{lem}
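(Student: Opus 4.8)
The plan is to evaluate $Q$ by the same Fourier-expansion technique already used for $A$ in Lemma~\ref{lem52}, feeding the resulting character sum over $\Ff_{q^2}$ into the value of $\widehat{f_b}(a)$ computed in Lemma~\ref{54}. Throughout I would read the $\psi(ax)$ in the statement as the canonical additive character $\chi$ of $\Ff_{q^2}$ (consistently with the definition of $Q$ in the proof of the distribution theorem above), and write $\omega=\zeta_p$. First I would expand the quadratic character of $\Ff_r$ via Lemma~\ref{gauss2}, $\eta(y)=\frac1r\sum_{b\in\Ff_r}G_r(\eta,\overline{\psi_b})\psi_b(y)$, and note that this identity remains valid at $y=0$: there the right-hand side is $\frac1r\sum_{b\in\Ff_r^*}G_r(\eta,\overline{\psi_b})=\frac1r G_r(\eta,\psi)\sum_{b\in\Ff_r^*}\eta(-b)=0$ by Lemma~\ref{gauss1}(1) (using $G_r(\eta,\psi_0)=0$). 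Hence the locus $\Tr_{q/r}x^T=0$ needs no separate treatment, and interchanging the two summations gives $Q=\frac1r\sum_{b\in\Ff_r}G_r(\eta,\overline{\psi_b})\sum_{x\in\Ff_{q^2}}\chi(ax)\psi_b(\Tr_{q/r}x^T)$.

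Next I would recognise the inner sum. By transitivity of trace $\psi_b(\Tr_{q/r}x^T)=\omega^{\Tr_{r/p}(b\Tr_{q/r}x^T)}=\omega^{\Tr_{q/p}(bx^T)}$, so the inner sum is exactly $\widehat{f_b}(a)$ of Lemma~\ref{54}. That lemma kills the $b=0$ term and, for $b\in\Ff_r^*$, replaces the inner sum by $-q\,\omega^{-\Tr_{q/p}(a^T/b)}$, leaving $Q=-\frac qr\sum_{b\in\Ff_r^*}G_r(\eta,\overline{\psi_b})\,\omega^{-\Tr_{q/p}(a^T/b)}$. Then I would use $G_r(\eta,\overline{\psi_b})=\overline{\eta(-b)}\,G_r(\eta,\psi)=\eta(-b)G_r(\eta,\psi)$ (Lemma~\ref{gauss1}(1) and $\eta$ real) and substitute $c=-b^{-1}$, which also ranges over $\Ff_r^*$ and satisfies $\eta(-b)=\eta(c)$ together with $-\Tr_{q/p}(a^T/b)=\Tr_{q/p}(ca^T)=\Tr_{r/p}(c\,\Tr_{q/r}a^T)$ by transitivity again. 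Writing $\mu:=\Tr_{q/r}a^T$, this collapses to $Q=-\frac qr\,G_r(\eta,\psi)\sum_{c\in\Ff_r^*}\eta(c)\psi_\mu(c)$.

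Finally I would split on $\mu$. If $\mu=0$ then $\sum_{c\in\Ff_r^*}\eta(c)=0$ since $\eta$ is nontrivial, so $Q=0$. If $\mu\neq0$ the inner sum is the Gauss sum $G_r(\eta,\psi_\mu)=\overline{\eta(\mu)}G_r(\eta,\psi)=\eta(\mu)G_r(\eta,\psi)$, so $Q=-\frac qr\,\eta(\mu)G_r(\eta,\psi)^2$; by Lemma~\ref{gauss1}(2) with $\overline\eta=\eta$ one has $G_r(\eta,\psi)^2=\eta(-1)r$, hence $Q=-q\,\eta(-1)\eta(\mu)=-q\,\eta(-\Tr_{q/r}a^T)$, which is the asserted value.

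I do not expect a real obstacle here, since the substantial character-sum evaluation over $\Ff_{q^2}$ was already carried out in Lemma~\ref{54}; the remaining work is careful bookkeeping — invoking transitivity of trace in both directions, tracking the sign produced by the substitution $c=-b^{-1}$, and exploiting that $\eta$ is real (so $\overline\eta=\eta$ and $G_r(\eta,\psi)^2=\eta(-1)r$). The one point worth flagging at the outset is that the character $\psi(ax)$ in the statement is the canonical additive character $\chi$ of $\Ff_{q^2}$.
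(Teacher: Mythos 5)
Your proposal is correct and follows essentially the same route as the paper: Fourier-expand $\eta$ via Lemma~\ref{gauss2}, identify the inner sum as $\widehat{f_b}(a)$ from Lemma~\ref{54}, substitute $c=-b^{-1}$, and finish with the Gauss-sum identities $G_r(\eta,\psi_\mu)=\eta(\mu)G_r(\eta,\psi)$ and $G_r(\eta,\psi)^2=\eta(-1)r$. Your explicit check that the Fourier expansion of $\eta$ remains valid at $y=0$ (with the convention $\eta(0)=0$) is a small point the paper leaves implicit, but it does not change the argument.
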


\begin{proof}
It holds that
\begin{eqnarray*}
Q&=&\sum_{x\in \Ff_{q^2}}\chi(ax)\eta(\Tr_{q/r}x^T)\\
&=&\sum_{x\in \Ff_{q^2}}\chi(ax)\frac{1}{r}\sum_{b\in \Ff_r}G_r(\eta,\overline{\psi_b})\psi_b(\Tr_{q/r}x^T)\\
&=&\frac{1}{r}\sum_{b\in \Ff_r}G_r(\eta,\overline{\psi_b})\sum_{x\in \Ff_{q^2}}\chi(ax)\psi_b(\Tr_{q/r}x^T)\\
&=&\frac{1}{r}\sum_{b\in \Ff_r}G_r(\eta,\overline{\psi_b})\sum_{x\in \Ff_{q^2}}\omega^{\Tr_{q^2/p}(ax)}\omega^{\Tr_{q/p}(b\Tr_{q/r}(x^T))}\\
&=&\frac{1}{r}\sum_{b\in \Ff_r}G_r(\eta,\overline{\psi_b})\sum_{x\in \Ff_{q^2}}\omega^{\Tr_{q^2/p}(ax)+\Tr_{q/p}(bx^T)}\\
&=&\frac{1}{r}\sum_{b\in \Ff_r^*}G_r(\eta,\overline{\psi_b})\widehat{f_b}(a)\\
&=&\frac{1}{r}\sum_{b\in \Ff_r^*}G_r(\eta,\overline{\psi_b})(-q\omega^{-\Tr_{q/p}\frac{a^T}{b}})\\
&=&-\frac{q}{r}\sum_{b\in \Ff_r^*}G_r(\eta,\overline{\psi_b})\omega^{\Tr_{r/p}(-\frac{1}{b}\Tr_{q/r}a^T)}\\
&=&-\frac{q}{r}\sum_{b\in \Ff_r^*}\overline{\eta}(-b)G_r(\eta,{\psi})\psi(-\frac{1}{b}\Tr_{q/r}a^T)\\
&=&-\frac{q}{r}G_r(\eta,{\psi})\sum_{c\in \Ff_r^*}\eta(c)\psi(c\Tr_{q/r}a^T)\\
&=&-\frac{q}{r}G_r(\eta,{\psi})G_r(\eta,\psi_{\Tr_{q/r}a^T}).
\end{eqnarray*}
The result follows from the fact that
$$G_r(\eta,\psi_{\Tr_{q/r}a^T})=
             \left\{
                 \begin{array}{ll}
                  0,& \hbox{if\ $\Tr_{q/r}a^T=$ 0,}\\
                 \overline{\eta}(\Tr_{q/r}a^T)G_r(\eta,{\psi}),& \hbox{if\ $\Tr_{q/r}a^T\neq$ 0,}
                \end{array}
              \right.
$$
and
$$G_r(\eta,{\psi})G_r(\eta,{\psi})=\eta(-1)r.$$
\end{proof}

 In Table \ref{table4}, we provide some explicit values of the parameters of the codebooks we proposed for some given $r$ and $q$, and corresponding numerical data of the Welch bound for comparison. The numerical results show  that the codebooks $\cc(D;E)$ asymptotically meet the Welch bound.

 \begin{table}[!htbp]
 \newcommand{\tabincell}[2]{\begin{tabular}{@{}#1@{}}#2\end{tabular}}
 \caption{\footnotesize Parameters of the $(N,K)$ codebook of Section IV}
 \label{table4}
 \centering
 \setlength{\tabcolsep}{1mm}{
 \begin{tabular}{|c|c|c|c|c|c|c|}
   \hline
 $r$& $q$ & $N$ & $K$ & $I_{max}(\cc)$ & $I_W$ & $\frac{I_{max}(\cc)}{I_W}$ \\ \hline
 $3^2$&$3^4$& $6561$& $2952$& $0.0152$& $0.0137$& $1.1166$ \\ \hline
 $19$&$19^2$& $130321$& $61902$& $0.0031$& $0.0029$& $1.0539$ \\ \hline
 $5^2$&$5^6$& $244140625$& $11719500$& $6.9329e-05$& $6.6609e-05$& $1.0408$ \\ \hline
 $3^3$&$3^6$& $531441$& $256230$& $0.0015$& $0.0014$& $1.0377$ \\ \hline
 $5^3$&$5^{6}$& $244140625$& $121101500$& $6.5028e-05$& $6.410e-05$& $1.0080$ \\ \hline
 $3^5$&$3^{10}$& $3.4868e+09$& $1.7362e+09$& $1.7075e-05$& $1.7005e-05$& $1.0041$ \\ \hline
  $7^3$&$7^{12}$& $1.9158e+20$& $9.5511e+19$& $7.2670e-11$& $7.2459e-11$& $1.0029$ \\ \hline
  $3^6$&$3^{12}$& $2.8243e+11$& $1.4102e+11$& $1.8868e-06$& $1.8843e-06$& $1.0014$ \\ \hline
  $13^3$&$13^6$& $2.3298e+13$& $1.1644e+13$& $2.0736e-07$& $2.0727e-07$& $1.0005$ \\ \hline
 \end{tabular}}
 \end{table}


\section{Concluding remarks}
%
In this paper, we presented two new constructions of codebooks asymptotically achieve the Welch bounds with multiplicative characters, additive characters, trace functions and Fourier expansion of finite  fields. Our two constructions are the generalizations of  the constructions in \cite{NM5} and \cite{NM1}, respectively.  The parameters of our codebooks are  flexible and new.

One feature of our construction is the combination of the character sums and the transitivity of trace functions. Since character sums of finite fields are one of the main methods used in the constructions of near optimal codebooks, we believe that using trace functions we can generalize more known codebooks constructed by character sums.

In our second construction, one key point is that $f_b(x)$ in the Lemma IV.4 is a regular bent function. We wonder if there exist other (weakly) regular bent functions such that  the codebooks constructed by them can asymptotically achieve the Welch bound.
\end{document}